\def\@seccntformat#1{\@ifundefined{#1@cntformat}%
   {\csname the#1\endcsname\quad}  
   {\csname #1@cntformat\endcsname}
}
\let\oldappendix\appendix 
\renewcommand\appendix{%
    \oldappendix
    \newcommand{\section@cntformat}{\appendixname~\thesection\quad}
}
\newenvironment{lessspaceitemize*}%
  {\begin{itemize}%
  \vspace{-1mm}
    \setlength{\itemsep}{0pt}%
    \setlength{\parskip}{0pt}}%
    {\vspace{-1mm} \end{itemize}}
\newenvironment{lessspaceenum*}%
  {\begin{enumerate}%
  \vspace{-1mm}
    \setlength{\itemsep}{0pt}%
    \setlength{\parskip}{0pt}}%
  {\end{enumerate}}
\newenvironment{definition}[1][Definition]{\begin{trivlist}
\item[\hskip \labelsep {\bfseries #1}]}{\end{trivlist}}
\DeclareMathOperator*{\argmax}{arg\,max}
\DeclareMathOperator*{\argmin}{arg\,min}
\newtheorem{theorem}{Theorem}[section]
\newtheorem{lemma}[theorem]{Lemma}
\newtheorem{claim}[theorem]{Claim}
\newtheorem{corollary}[theorem]{Corollary}
\newtheorem*{inf_thm*}{Informal Theorem}
\newenvironment{thm_app}[1]{\noindent\textbf{Theorem~\ref{#1}.}}{\par\addvspace{\baselineskip}}
\newenvironment{lem_app}[1]{\noindent\textbf{Lemma~\ref{#1}.}}{\par\addvspace{\baselineskip}}
\newenvironment{clm_app}[1]{\noindent\textbf{Claim~\ref{#1}.}}{\par\addvspace{\baselineskip}}
\newlength\myindent
\newlength\myindenttwo
\newcommand\bindent{%
  \begingroup
  \setlength{\itemindent}{\myindent}
  \addtolength{\algorithmicindent}{\myindent}
}
\newcommand\eindent{\endgroup}
\newcommand\bindenttwo{%
  \begingroup
  \setlength{\itemindent}{\myindenttwo}
  \addtolength{\algorithmicindent}{\myindenttwo}
}
\newcommand\eindenttwo{\endgroup}
\title{Posted Pricing sans Discrimination}
\author{Shreyas Sekar\\ Rensselaer Polytechnic Institute}
\begin{document}

\newpage

\maketitle

\begin{abstract}
	In the quest for market mechanisms that are easy to implement, yet close to optimal, few seem as viable as posted pricing. Despite the growing body of impressive results, the performance of most posted price mechanisms however, rely crucially on price discrimination when multiple copies of a good are available. For the more general case with non-linear production costs on each good, hardly anything is known for general multi-good markets. With this in mind, we study a Bayesian setting where the seller can produce any number of copies of a good but faces convex production costs for the same, and buyers arrive sequentially. Our main contribution is a framework for non-discriminatory pricing in the presence of production costs: the framework yields posted price mechanisms with $O(1)$-approximation factors for fractionally subadditive buyers, logarithmic approximations for subadditive buyers, and also extends to settings where the seller is oblivious to buyer valuations. Our work presents the first known results for Bayesian settings with production costs and is among the few posted price mechanisms that do not charge buyers differently for the same good.

\end{abstract}

	\section{Introduction}
In the quest for market mechanisms that are \emph{simple}, yet \emph{approximately optimal}~\cite{hartlineR09}, few candidates seem as appealing as \emph{posted price mechanisms}, where the seller simply posts prices on the items and buyers consume their desired bundles. This optimism towards posted prices is not without cause: a long line of research has established that in addition to a slew of favorable properties, mechanisms based on posted pricing provide near-optimal performance guarantees in a number of diverse applications~\cite{adamczykBFKL15,blumrosenH08,dobzinski07,feldmanGL15,meirCF13}. More recently, some of these results have been extended to general settings with combinatorial buyer valuations, where sellers only possess distributional information regarding the same~\cite{cohen2015invisible, feldmanGL15,hsu2015prices}; such results are a crucial first step towards the larger goal of understanding the performance of posted prices in more realistic, and general markets. 
	
	Real markets, however, often feature sellers who possess multiple copies of different goods or more generally, costs associated with producing or transporting each copy of good. Here, our picture of posted pricing mechanisms is far from complete. Papers dealing with multi-unit supply in combinatorial auctions have often resorted to \textit{price discrimination}, i.e., charging buyers differing prices for the same good, in order to extract good performance bounds~\cite{blumGMS11,chakrabortyEGMM10a,chawlaHMS10,cohen2015invisible,huang2015welfare}. In the posted pricing literature, one also encounters discriminative pricing policies under more benign labels such as \textit{dynamic pricing}~\cite{chakrabortyHK09}, \textit{sequential posted-pricing}~\cite{chawlaHMS10} or \emph{anonymous reserve prices}~\cite{daskalakisP11}. Looking beyond multi-unit supply to the more general case of production costs, little is known about the performance of posted price mechanisms when the underlying costs are non-linear, even less so in Bayesian settings.
	
	\begin{quote}
	\emph{Is discriminatory pricing essential for good performance in the presence of multi-unit supply? Do posted prices give desirable guarantees in a Bayesian setting involving production costs?} \end{quote}
	
	 These are the questions that we seek to answer in this work. Our work is motivated by both theory and practice; price discrimination, while accepted practice in some domains (airline tickets, taxis), is inherently unfair to the buyers and can lead to adverse effects in the long-run~\cite{anderson2010price}. On the other hand, extending the state-of-the-art in \emph{Bayesian Mechanism Design} to markets with general buyer valuation functions and production costs often necessitates blurring the line between good and copy, resulting in mechanisms that price each copy of a single good as if it were a distinct good. 

	Our central result answers both the questions posed above for Bayesian settings with fractionally subadditive (XoS) buyer valuations and convex production costs. Specifically, we design an incentive compatible, non-discriminatory mechanism based on static posted prices that extracts a constant fraction of the optimum social welfare. At a high level, our work presents a strong case for treating buyers fairly: by bounding the gains obtained via discriminatory pricing, we show that the incentives for discrimination may not offset the negative impact of treating buyers unfairly. More importantly, our results are enabled by a rather general black-box framework for non-discriminatory pricing in the presence of non-linear costs, which may find application in other markets featuring posted prices. We believe that this is the most useful contribution of our current work.\\

%
	
\noindent\textbf{Market Model} We consider a market where a single seller controls a set $\mathcal{I}$ of $M$ goods and $N$ buyers have combinatorial valuation functions: $v: 2^{\mathcal{I}} \rightarrow \mathbb{R}^+ \cup \left\lbrace 0 \right\rbrace$. The seller faces a convex production cost function with non-decreasing marginal costs for each good, i.e., for every $i \in \mathcal{I}$, the seller incurs a cost of $C_i(x)$ for producing $x$ units of this good. Such convex costs strictly generalize settings with limited supply. For the majority of this work, we assume a Bayesian setting with prior $\mathcal{F}=F_1 \times F_2 \ldots \times F_N$: buyer $i$'s private valuation is drawn independently from distribution $F_i$, which is known to the seller. Our objective is to maximize the social welfare, which is defined as the total value derived by the buyers minus the production cost incurred by the seller. Towards this goal, we propose two types of posted pricing mechanisms depending on whether or not the seller has to commit to producing an exact amount of each good in advance. Both these mechanisms follow a simple template: $(i)$ the seller posts a single price per good, and $(ii)$ buyers arrive sequentially and each buyer purchases a utility maximizing subset of the available goods. 

\subsubsection*{Challenges: Keeping Prices Static in the face of Non-Linear Production}
Given the voluminous body of research in mechanism design and algorithmic pricing, the absence of mechanisms that encompass markets with production (specifically with a view towards fair pricing) is somewhat conspicuous. This is particularly surprising given that increasing marginal costs are a natural consideration in a number of markets. As~\cite{blumGMS11} points out,

\begin{quote}
``\textit{the unlimited-supply case is too optimistic and the limited-supply case too pessimistic.  More often than not, additional sources can be found, but at higher cost.}''
\end{quote}

Production costs however, introduce fundamentally new challenges to the welfare maximization problem with complex buyer valuations~\cite{blumGMS11}. For example, tailoring prices to carefully match the buyer valuations (as in existing work~\cite{feldmanGL15}) may be impossible without price discrimination when buyers have vastly differing valuations for the same good. On the contrary, just mimicking the production cost (as in~\cite{blumGMS11,huang2015welfare}) may result in the goods being sold to an initial sequence of low-valuation buyers. In order to overcome these obstacles, we develop a new framework that is a product of examining the \textit{profit-surplus equivalence} idea used in~\cite{dobzinski07,feldmanGL15} at a granular level and re-building it from the ground up in greater generality. In the process, we generalize the central result of~\cite{feldmanGL15} to settings with production, and also show that one can strictly improve upon the results in~\cite{blumGMS11, huang2015welfare} in Bayesian settings.

\subsection{Summary of Contributions}
As mentioned previously, we introduce two types of static posted pricing mechanisms in this work. In both of these, the seller fixes a single price per good, and buyers arrive sequentially in some arbitrary, possibly adversarial order and each buyer purchases a subset of the available goods that maximizes her utility at the originally fixed prices. 

\begin{enumerate}
\item \textbf{On the Fly Mechanisms} (OTF): The seller fixes an upper bound on the number of copies of each good that he is willing to produce and incurs a production cost only for the items that are actually sold. Once the number of copies of a good sold reaches the upper bound, the item becomes unavailable for future buyers. 

\item \textbf{Commitment Mechanisms:} The seller commits to producing a certain quantity of each good and incurs the production cost for that quantity whether or not those items are sold. 
\end{enumerate}

We observe that when the seller has limited supply, the two mechanisms are equivalent. However, for strictly non-linear production  costs, on the fly mechanisms are a relaxation of commitment mechanisms, and therefore, the latter is harder to design. Finally, all of the mechanisms in this work are trivially \emph{dominant strategy incentive compatible}.

\subsubsection*{Central Result: $O(1)$-approximate OTF Mechanism for Bayesian Settings}
We begin by stating our black-box reduction that transforms any approximation algorithm $Alg$ for the \emph{allocation problem}:  the problem of allocating goods to XoS buyers to maximize social welfare, into a non-discriminatory posted price mechanism that ensures a social welfare of $\frac{1}{2}E_{\vec{v} \sim \mathcal{F}}[SW(Alg(\vec{v}))]$, where $SW(Alg(\vec{v}))$ is the social welfare guaranteed by $Alg$ on input valuations $\vec{v}$.

\begin{inf_thm*}
\textbf{(On the fly mechanism)} Given any $c$-approximation algorithm ($c \geq 1$) for settings with XoS valuations and convex production costs, we can compute a posted price mechanism that ensures a social welfare that is within a factor $2c$ of that of the optimum allocation. 
\end{inf_thm*}

The black-box result makes use of both a demand oracle and an XoS oracle for the buyer valuations as well as access to the distributions from which these valuations are derived. To supplement the reduction, we also present a simple, poly-time two approximation algorithm for the allocation problem with $XoS$ buyers and convex costs. Together, they yield our main computational result: ``\emph{a mechanism that posts a single price per good and extracts one-fourth of the optimum welfare for arbitrary buyer arrival orders}". 


\noindent \textbf{(Why is this result significant?)} This result directly extends the posted price mechanisms in~\cite{feldmanGL15} which were used to obtain good welfare bounds for unit supply settings. Moreover, while previous mechanisms~\cite{blumGMS11, huang2015welfare} gave non-trivial guarantees only for specific cost functions, our results hold for all convex production cost functions. Although the results in~\cite{feldmanGL15} directly apply to settings with increased supply, this would require the seller to post a different price for multiple copies of the same item. On the contrary, the notion of posting a \emph{single price for all copies of a good} is the driving force behind all of our mechanisms. Finally, our $2$-approximation algorithm for the allocation problem is also the only known algorithm for settings with production. 

%

\noindent \textbf{Commitment Mechanism and XoS valuations:}
For the somewhat challenging class of commitment mechanisms, we provide a slightly modified mechanism whose welfare guarantees are sensitive to the \emph{structure of the allocation} returned by the algorithm: our mechanism yields desirable welfare guarantees for instances where the algorithm returns a solution whose expected social welfare is large compared to the cost incurred. Formally, we define an instance to be $\alpha$-bounded with respect to an algorithm $Alg$, if in the solution returned by the algorithm, the expected value derived by the buyers is at least a factor $\alpha$ times the expected production cost incurred by the seller. 

\begin{inf_thm*}
\textbf{(Commitment Mechanisms)} Given an algorithm $Alg$ for the allocation problem, and an instance that is $\alpha$-bounded with respect to $Alg$ for $\alpha \in [2,\infty]$, there exists a posted price mechanism that ensures a social welfare of $\frac{1}{2}\frac{\alpha- 2}{\alpha-1}E_{\vec{v} \sim \mathcal{F}}[SW(Alg(\vec{v}))]$. 
\end{inf_thm*}	

The algorithm requires access to the same oracles as before. Similar notions of $\alpha$-boundedness have been considered in the literature for other problems involving costs~\cite{feigeIMN13}. We argue that $\alpha$ is a natural parameter that arises in the context of commitment mechanisms as the seller may not be willing to invest in markets where he does not recover a large enough social value. In Section~\ref{sec:commitment}, we also show that if the cost functions are `sufficiently convex', then $\alpha$ is large irrespective of the buyer valuations, and our mechanism provides good bounds. Finally, in conjunction with our $2$-approximation algorithm, we get a  $4 \frac{\alpha - 1}{\alpha-2}$-approximate commitment mechanism for XoS buyers. 

\subsubsection*{Extensions: OTF Mechanisms for Unknown Distributions and Subadditive Buyers}
We also partially generalize our results to settings with more general buyer valuations. Specifically, for markets with subadditive buyer valuations, a strict generalization of XoS functions, and limited supply, \emph{we design posted pricing mechanisms that obtain a $O(\log M)$-approximation to the optimum social welfare}. Once again, the selling point of our mechanism is that it only posts a single price per good. 

Given the fact that subadditive valuations are log-approximate XoS functions, an $O(\log(M))$-approximate mechanism appears to be a natural by-product of our previous result. Indeed, in the conclusion of~\cite{feldmanGL15}, it was mentioned that their techniques imply the existence of a log-approximate posted pricing mechanism for settings with subadditive buyers and unit-supply.  Our result shows that it is possible to efficiently compute these posted prices in rather general settings with limited supply.

\noindent\textbf{Unknown Buyer Valuations and Production Costs:} All of the above results remain valid as long as we have query access to $(i)$ buyer distributions, $(ii)$ demand oracles and $(iii)$ XoS oracles. These are standard assumptions in the mechanism design literature. It is however, natural to envisage settings where the seller may not possess any prior information regarding the buyer valuations; in that case, the other oracles do not serve any purpose. Against this backdrop, we pose the following ambitious question: \emph{Is it possible to design static posted pricing mechanisms in markets with production costs when the seller has absolutely no information about buyer valuations?}

Surprisingly, we answer this question in the affirmative and show that it is possible to extract good social welfare even when the seller is \emph{completely in the dark}. Concretely, we present an on the fly mechanism for markets with XoS buyers and production costs that achieves a $O(\log M \log N)$-approximation to the optimum social welfare. Previously, such results were only known for dynamic pricing mechanisms~\cite{blumGMS11,chakrabortyHK09,huang2015welfare}. 

\noindent\textbf{Remark: Query Access to Buyer Valuations:}
Our mechanisms for the Bayesian setting exhibit a runtime polynomial in the support size of the product distribution $\mathcal{F}$. However, one can convert these mechanisms into computationally efficient procedures in a straightforward manner by repeatedly sampling valuation profiles and computing allocations for the sampled valuations. Specifically, directly applying the sampling techniques in~\cite{feldmanGL15}, we obtain mechanisms that run in time polynomial in $N,M,$ and $\frac{1}{\epsilon}$ with an additive welfare degradation of $\epsilon$ from the stated bound. In order to clearly illustrate the technical ideas underlying our theorems and avoid messy notation, we explicitly suppress the sampling aspect of our mechanisms for the rest of this work.

%
%
%
%


\subsection{Related Work}

This paper bridges the gap between two closely related fields: auction theory, and envy-free algorithmic pricing. Although the mechanisms in this work technically fall under the purview of Bayesian mechanism design~\cite{chawlaS14}, they are not auctions in the traditional sense as they do not elicit any input from the buyers. In fact, all of our mechanisms are trivially dominant strategy incentive compatible (DSIC) as the buyer does not interact with the mechanism in any way other than to purchase a utility-maximizing bundle at the fixed prices. In that sense, our results fall under the realm of \emph{item pricing algorithms}. Traditionally, papers in this area have looked at relaxations of Walrasian equilibrium in full information settings where buyers purchase goods simultaneously instead of sequentially~\cite{chenR08,fuKL12,gul1999walrasian}. However, prices that extract high welfare exist only for simple valutions~\cite{feldmanwelfarepricing}. In that context, one could view the pricing mechanisms in this work as Walrasian equilibria obtained by relaxing the \emph{simultaneous arrival} constraint.

To the best of our knowledge, there are no mechanisms in the literature that meet all of the desiderata fulfilled by our framework: $(i)$ general buyer valuations (XoS), $(ii)$ (convex) production cost functions, $(iii)$ non-discriminatory pricing, $(iv)$ non-identically distributed buyers, and $(v)$ dominant strategy incentive compatibility. However, if we relax some of these constraints, we stumble upon a number of closely related works. For instance, for single-parameter or even unit-demand settings, there exist computationally efficient mechanisms that maximize social welfare~\cite{nisan2007algorithmic}, whereas anonymous reserve prices are known to yield good performance in i.i.d environments. In~\cite{feldmanGL15}, the authors design a $O(1)$-posted pricing mechanism for XoS valuations and unit-supply; we successfully generalize their results without having to resort to price discrimination. 

Owing to their popularity in recent applications, there has been a surge in the theoretical investigation of dynamic or online pricing mechanisms. Of particular relevance to us are~\cite{azar16convex,blumGMS11,huang2015welfare}, all of which look at dynamic pricing via a OTF-like mechanism in the face of production. Such mechanisms are more powerful than ours, however, they do not guarantee fairness as different buyers may observe different prices for the same good depending on available supply. Moreover, the approximation guarantees offered in these papers depend on the nature of the production cost function; we resort to the Bayesian setup to ensure that the welfare bounds of our OTF mechanisms are independent of the function. Conceptually, our model is closely linked to the burgeoning body of partition mechanisms (as defined in~\cite{rubinstein16}), a strict generalization of the posted pricing mechanisms in this work where sellers post prices on bundles. Such mechanisms are known to perform favorably in a number of settings with respect to both welfare~\cite{feldmanGL16,parkesU00} and revenue~\cite{rubinstein16}. In comparison, our work shows that it is often possible to achieve good welfare using only item prices. 

A recurring theme in mechanism design concerns the use of black-box reductions to transform approximation algorithms into pricing mechanisms, as we do in this work. Although this has yielded a number of beautiful computational tools, such mechanisms are usually discriminatory and relax DSIC in favor of weaker notions such as Bayesian incentive compatibility~\cite{beiH11}. Finally, in contrast to our focus on convex production cost functions, in~\cite{BlumMY15}, the authors study economies of scale (concave costs) and provide non-discriminatory posted pricing mechanisms for the Bayesian regime . 

\section{Model and Preliminaries}

We consider settings where a seller controls a set $\mathcal{I}$ of goods; each $i \in \mathcal{I}$ has a convex production cost, i.e., for any $n$, the seller incurs a (marginal) cost of $c_i(n)$ for producing the $n^{th}$ copy of this good. Moreover, since the costs are convex, $c_i(n)$ is non-decreasing as $n$ increases. The seller can choose to produce any number of goods for which he faces the corresponding aggregate production cost ($C_i(n) = \sum_{r=1}^n c_i(r)$ for $n$ copies of good $i$). Notice that convex production costs are a strict generalization of limited or multi-unit supply. For instance $c_i(n) = 0$ for $n \leq k$ and $c_i(n) = \infty$ otherwise, indicates a fixed supply of $k$ units. 

Every buyer $j$ has a valuation function with $v_j(S)$ being her value for a set $S \subseteq \mathcal{I}$ of goods. Each valuation $v_j$ is monotone with $v_j(\emptyset) = 0$. Although, this only allows for the buyer to  purchase one copy of each good, all of the results in this work extend directly to the case when buyers' valuations allow for multisets. Given an allocation $\vec{S} = (S_1, S_2, \ldots, S_N)$ where $S_j$ is the set of goods allocated to buyer $j$, suppose that $\forall i \in \mathcal{I}$, $x_i$ represents the number of units of this good allocated to the buyers. Then, the total social welfare of this allocation is $SW(\vec{S}) = \sum_{j=1}^N v_j(S_j) - \sum_{i \in \mathcal{I}}C_i(x_i)$.

We are interested in the social welfare maximization problem in a Bayesian setting, where the buyer valuations are drawn from the distribution $\mathcal{F} = F_1 \times F_2 \ldots \times F_N$. We refer to the case where each $F_i$ is deterministic as the full information version. For the majority of this work, we will assume that all buyers have XoS valuations, which strictly generalize submodular valuations and are a special class of subadditive valuations. 

\begin{description}
\item [Fractionally Subadditive (also called `XoS') Valuations] $\exists$ a set of additive functions\\$(a_1, \ldots, a_r)$ such that for any $T \subseteq \mathcal{N}$, $v(T) = \max_{j=1}^{r} a_j(T)$. These additive functions are referred to as {\em clauses.} Recall that an additive function $a_j$ has a single value $a_j(i)$ for each $i \in \mathcal{N}$ so that for a set $T$ of agents, $a_j(T) = \sum_{i \in T}a_j(i)$. 

\item [Subadditive Functions] A valuation function $v$ is said to be subadditive if for every $S,T \subseteq \mathcal{I}$, $v(S \cup T) \leq v(S) + v(T)$.

\end{description}

%

\noindent\textbf{Oracle Access} The standard approach in the literature while dealing with set functions (where the input representation is exponential in size) is to assume the presence of an oracle that allows indirect query access to the valuation. In this work, unless mentioned otherwise, we assume that the valuations are accessed via $(i)$ a \emph{demand oracle} that when queried with a vector of payments $\vec{p}$ returns a set $S$ that maximizes the quantity $v(S) - \sum_{i \in S}p_i$, and $(ii)$ an \emph{XoS oracle} that for an XoS function $v$ and a set $T \subseteq \mathcal{I}$ returns the additive clause $a_l$ that maximizes $a_l(T)$. 

\subsubsection*{Seller's Mechanism: Static Posted Prices}
A posted pricing mechanism with item prices is said to \emph{static} or \emph{anonymous} or \emph{non-discriminatory} if no two buyers are offered the same good at differing prices. We now formally define the two types of mechanisms considered in this work.

\textbf{On the fly mechanism}
In an `on the fly' mechanism, a seller only produces as much of the good as required by the buyers.

\begin{enumerate}

\item the seller posts a single price per good, and fixes an upper bound on the number of copies of each good that he is willing to produce.

\item buyers arrive in some arbitrary order

\item each buyer looks at the set of available goods and purchases her utility maximizing bundle.

\item an item becomes unavailable once the number of copies purchased equals the upper bound.
\end{enumerate}
The seller incurs the cost for producing the good only when a buyer decides to purchase a good. The upper bound can be thought of as the seller preparing his machinery and other paraphernelia for producing a certain number of goods, but stopping short of actually producing them.

\textbf{Commitment Mechanisms} On the contrary, in a commitment mechanism, the seller produces the goods before the buyer arrives and has to make a decision on the quantity of each good that he has to manufacture. Therefore, the seller incurs an additional cost for any goods that are not sold. From a design perspective, these mechanisms are much harder than on the fly mechanisms.

\begin{enumerate}

\item the seller posts a single price per good and produces a certain number of copies of each good.

\item buyers arrive in some arbitrary order

\item each buyer purchases her utility maximizing bundle from the remaining items.
\end{enumerate}

\section{On the fly mechanism for XoS buyers}
\label{sec:framework}
In this section, we present our main theoretical result: a black-box reduction that allows us to transform any approximation algorithm for the allocation problem into a posted price mechanism with similar welfare guarantees. Along the way, we design a general framework for posted pricing in the presence of production and present it as a sequence of simple ideas. The black-box reduction functions as an existence result. Following this, we present an actual $2$-approximation algorithm for the problem of allocating goods to XoS buyers when faced with production costs. Together, these yield a posted pricing mechanism that provides a $4$-approximation to the optimum solution.

\begin{theorem}
	\label{thm:otfbayesin}
	Given a product distribution $\mathcal{F}$, an algorithm $Alg$, demand and XoS oracles, we can compute in poly-time a price vector $\vec{p^*}$ so that the on the fly mechanism with this price vector provides a welfare guarantee (in expectation) of $\frac{1}{2}E_{\vec{v} \sim \mathcal{F}} [SW(Alg(\vec{v}))]$.
\end{theorem}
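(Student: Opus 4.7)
The plan is to adapt the Feldman--Gravin--Lucier balanced-prices argument to our setting with convex production, where the chief new difficulty is that a \emph{single} price per good must be charged for all copies even though marginal costs differ from copy to copy.

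I would first use $Alg$ to characterize the target allocation under the prior: for each realization $\vec{v}\sim\mathcal{F}$, let $S_j^*(\vec{v})$ be the bundle $Alg$ assigns to buyer $j$, let $x_i^*(\vec{v})$ be the total copies of good $i$ produced, and via the XoS oracle extract clauses $a_j^*(\cdot,\vec{v})$ satisfying $v_j(S_j^*(\vec{v})) = \sum_{i\in S_j^*(\vec{v})} a_j^*(i,\vec{v})$. I would then pick, for each good $i$, a production cap $B_i^*$ and a single price $p_i^*$ satisfying the two coupled conditions: (i) $p_i^* \geq c_i(B_i^*)$, so that convexity of $c_i$ gives $p_i^* \geq c_i(k)$ for every $k\leq B_i^*$ and hence every copy actually sold by the OTF mechanism yields non-negative marginal profit; and (ii) $p_i^* \cdot B_i^*$ equals (up to a factor of $\tfrac{1}{2}$) the expected XoS contribution of good $i$ in $Alg$, namely $E_{\vec{v}}\bigl[\sum_j a_j^*(i,\vec{v})\,\mathbf{1}[i\in S_j^*(\vec{v})]\bigr]$. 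Existence of such a pair $(p_i^*,B_i^*)$ should follow from a continuity / water-filling argument on the convex function $c_i$, and can be carried out in polynomial time in the support of $\mathcal{F}$ (with the standard sampling reduction the authors cite converting this to poly$(N,M,1/\epsilon)$).

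Second, I would analyze the OTF mechanism against a fixed realization $\vec{v}$ and an arbitrary adversarial arrival order. When buyer $j$ arrives and sees the available set $A_j$, utility maximization together with the XoS inequality gives
\[
u_j \;\geq\; v_j\bigl(S_j^*(\vec{v})\cap A_j\bigr) - \sum_{i\in S_j^*(\vec{v})\cap A_j} p_i^* \;\geq\; \sum_{i\in S_j^*(\vec{v})\cap A_j}\bigl(a_j^*(i,\vec{v}) - p_i^*\bigr).
\]
Summing over buyers, taking expectation, and splitting $\mathbf{1}[i\in A_j] = 1 - \mathbf{1}[i\notin A_j]$ produces a ``main'' term $E[V(Alg)] - \sum_i p_i^*\,E[x_i^*]$ plus a ``loss'' term over items $i\in S_j^*\setminus A_j$. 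The loss is absorbed by prophet-inequality bookkeeping: unavailability of $i$ for $j$ requires that $B_i^*$ copies of $i$ have already been bought at price $p_i^*$ in OTF, so revenue $p_i^*\cdot B_i^*$ has already accrued and, by (ii), dominates the forgone XoS value on the order of $\tfrac{1}{2}$. Combining this with (i), which guarantees non-negative seller profit, and writing the realized welfare as $SW_{PP} = \sum_j u_j + (\text{Revenue} - \text{Cost})$, should yield $E[SW_{PP}] \geq \tfrac{1}{2}\,E[SW(Alg(\vec{v}))]$.

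The hard step is the coupled price design in the first paragraph. With a single price per good and non-constant marginal cost $c_i$, pricing high enough to cover $c_i(B_i^*)$ tends to price out earlier copies in terms of buyer utility, while pricing to earlier copies leaves the last ones unprofitable; the two constraints (i)--(ii) fight each other. Resolving this requires exploiting convexity of $c_i$ so that $c_i(B_i^*)$ simultaneously upper-bounds the per-copy production cost and leaves enough gap between $p_i^*$ and the typical clause value $a_j^*(i,\vec{v})$ to deliver the $\tfrac{1}{2}$ guarantee---in effect, the profit generated on the inframarginal copies must subsidize both the marginal copy's cost and the buyers' surplus. A secondary subtlety is that $x_i^*(\vec{v})$ is itself random and may exceed $B_i^*$ on tail events; controlling these without degrading the expected-welfare bound is exactly the place where the ``profit--surplus equivalence'' perspective the authors emphasize is likely to do the heavy lifting.
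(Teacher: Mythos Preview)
Your overall architecture matches the paper's---extract XoS clauses from $Alg$, set a balanced price per good, and split realized welfare into buyer surplus plus seller profit---but two concrete steps do not work as you have them.

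\textbf{Price design.} Condition (i), $p_i^*\ge c_i(B_i^*)$, is the source of the tension you flag, and it is stronger than necessary. The paper takes $k_i^*=E_{\vec v}[k_i(\vec v)]$ and imposes only the \emph{average}-cost condition $p_i^* k_i^*\ge C_i(k_i^*)$; by convexity $C_i(t)/t$ is nondecreasing, so this already yields $t\,p_i^*\ge C_i(t)$ for every $t\le k_i^*$, which is all one needs to keep profit nonnegative when an item is not sold out. Correspondingly, your (ii) is missing the cost term: the paper sets $p_i^*=\bigl(V_i+E_{\vec v}[C_i(k_i(\vec v))]\bigr)/(2k_i^*)$, the exact profit--surplus equivalence point, and then the average-cost condition follows from Jensen's inequality. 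With your choice $p_i^* B_i^*=\tfrac12 V_i$ the profit at sell-out is $\tfrac12 V_i - C_i(B_i^*)$, which need not be $\ge \tfrac12 SW_i$ when $V_i$ is only slightly larger than $E[C_i]$; shrinking $B_i^*$ to force (i), as your water-filling remark suggests, then breaks the surplus side because the main term carries $E[x_i^*]=k_i^*$, not $B_i^*$.

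\textbf{Surplus analysis.} You compare $u_j$ to the bundle $S_j^*(\vec v)$ for the \emph{same} realization $\vec v$ that drives the mechanism. The events $\{i\in S_j^*(\vec v)\}$ and $\{i\notin A_j\}$ are correlated through $\vec v_{-j}$, and the claim that revenue $p_i^* B_i^*$ ``dominates the forgone XoS value'' fails: on realizations where $i$ sells out, the lost mass $\sum_{j:\,i\in S_j^*\setminus A_j} a_j^*(i,\vec v)$ can be all of $V_i(\vec v)$, and $i$ tends to sell out precisely when $V_i(\vec v)$ is large, so even in expectation the loss is not controlled by the constant $p_i^* B_i^*$. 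The paper's fix is the standard ghost-sample decoupling: for each $j$ draw an independent $\vec a_{-j}\sim\mathcal F_{-j}$ and lower-bound $u_j$ by the utility of buying $A_j(v_j,\vec a_{-j})\setminus Sold_j(\vec v)$. Then $\mathbf 1[i\notin Sold_j(\vec v)]$ is independent of $\mathbf 1[j\in N_i(\vec a)]\cdot(o_j(i)-p_i^*)$ and factors out, giving surplus $\ge\sum_i \Pr[i\text{ not sold out}]\,(V_i-p_i^* k_i^*)$, which combines with profit $\ge\sum_i\Pr[i\text{ sold out}]\,(p_i^* k_i^*-C_i(k_i^*))$ to yield $\tfrac12\sum_i SW_i$. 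Without this decoupling, the bookkeeping you outline does not close.
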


We reiterate an important point made earlier regarding the computational nature of our mechanisms. In order to not detract from main ideas involved in the proof, we assume that the distribution $\mathcal{F}$ can be represented using a polynomial number of valuation profiles. When this is not the case, one can use standard sampling techniques as in~\cite{feldmanGL15} to compute prices in time polynomial in $\frac{1}{\epsilon}$ so that the mechanism provides a welfare guarantee of $\frac{1}{2}E_{\vec{v} \sim \mathcal{F}} [SW(Alg(\vec{v}))] - \epsilon$. Before proving the theorem, we state the concomitant results that lead to an actual computational mechanism.

\subsubsection*{Combinatorial Algorithm for Allocating Goods to XoS buyers}
The secondary algorithmic contribution of this paper is a $2$-approximation algorithm for the `welfare maximizing allocation' problem that complements (and in some sense, enables) the black-box transformation in Theorem~\ref{thm:otfbayesin}. Our algorithm generalizes the deterministic $2$-approximation algorithm for XoS buyers and unit-supply first studied in~\cite{dobzinskiNS10} to settings with production costs. Formally, the allocation problem consists of a set $\mathcal{I}$ of goods and a profile $\vec{v} = (v_1, v_2, \ldots, v_N)$ of XoS functions along with production costs for each good; the objective is to allocate a set of goods to each buyer in order to maximize the resulting social welfare. 

We now state the claim corresponding to our algorithm that in conjunction with Theorem~\ref{thm:otfbayesin} yields a $4$-approximate posted price mechanism. 

\begin{claim}
\label{clm_allocationalg}
 ($2$-Approximation Algorithm) Given a valuation profile  of XoS functions, and access to demand, and XoS oracles, we can compute in poly-time an allocation of goods to the buyers whose social welfare is at least half of that of the welfare maximizing allocation.
\end{claim}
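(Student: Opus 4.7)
The plan is to generalize the deterministic $2$-approximation algorithm of Dobzinski, Nisan and Schapira (DNS) for unit-supply XoS combinatorial auctions to handle convex production costs. The central idea is to treat each good $i$ as an augmented item space $\{(i,1),(i,2),\ldots\}$, where $(i,k)$ represents the $k$-th copy of good $i$, and to initialize the posted price of copy $(i,k)$ at the marginal production cost $c_i(k)$. A buyer's XoS valuation $v_j$ extends to the augmented space by $v_j^+(S):=v_j(\{i:(i,k)\in S\text{ for some }k\})$, and both the demand and XoS oracles on $v_j^+$ can be simulated using the original oracles because buyer $j$ never benefits from taking multiple copies of the same good.

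The algorithm maintains prices $p_{i,k}$ (initialized to $c_i(k)$) and a partial allocation of copies. Processing buyers in arbitrary order, when buyer $j$ arrives I set $q_i:=\min_k p_{i,k}$, invoke the demand oracle for $v_j$ at the vector $(q_i)_{i\in\mathcal{I}}$ to obtain a bundle $T_j$, and for each $i\in T_j$ allocate the cheapest available copy $(i,k_i^\star)$ to $j$ (displacing the previous owner if any); finally I use the XoS oracle to retrieve a clause $a_j$ realizing $v_j(T_j)$ and update $p_{i,k_i^\star}\leftarrow \max(p_{i,k_i^\star},a_j(i))$. Two invariants can be established essentially as in DNS. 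First, the demand condition together with the XoS property forces $q_i^{(j)}\le a_j(i)$ for every $i\in T_j$, so prices stay monotone non-decreasing and never drop below the baseline $c_i(k)$. Second, the utility $u_j:=v_j(T_j)-\sum_{i\in T_j} q_i^{(j)}$ of buyer $j$ equals the increase in $\sum_{i,k}p_{i,k}$ during her round, which telescopes to $\sum_j u_j=\sum_{i,k}(p_{i,k}^\star - c_i(k))$ where $p^\star$ denotes the final prices.

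The approximation argument mirrors DNS with extra bookkeeping for costs. Since $C_i$ is convex, I may assume without loss of generality that OPT uses copies $1,\ldots,x_i^{\text{OPT}}$ of good $i$; let $\sigma_j(i)$ index the copy OPT assigns to buyer $j$. Demand optimality of $j$ at the moment she is processed yields
\[ v_j(T_j^{\text{OPT}})\;\le\;u_j+\sum_{i\in T_j^{\text{OPT}}}q_i^{(j)}\;\le\;u_j+\sum_{i\in T_j^{\text{OPT}}}p_{i,\sigma_j(i)}^{\star}. \]
Summing over $j$ and subtracting the OPT production cost gives
\[ SW(\text{OPT})\;\le\;\sum_j u_j+\sum_i\sum_{k=1}^{x_i^{\text{OPT}}}\bigl(p_{i,k}^\star-c_i(k)\bigr)\;\le\;2\sum_j u_j, \]
since the last sum ranges over a subset of the non-negative terms in $\sum_j u_j$. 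For the lower bound, every copy touched during execution remains in some buyer's final bundle (items are passed between owners but never released back to the unallocated pool), and the current-owner's XoS clause witnesses $p_{i,k}^\star=a_{j_{\mathrm{own}}}(i)$; hence XoS subadditivity yields $SW(\text{ALG})\ge \sum_j u_j$, and combining the two inequalities gives $SW(\text{ALG})\ge\tfrac12 SW(\text{OPT})$.

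The principal obstacle is the OPT upper bound: the prices $p_{i,\sigma_j(i)}^{\star}$ reference specific copies of good $i$ that the algorithm might never have touched, so a priori nothing controls them. The argument succeeds precisely because $p^\star\ge c$ by the cost-based initialization, so the excess $p^\star-c$ summed over the copies OPT uses is dominated by the total telescoped excess $\sum_j u_j$. Verifying that the simulated oracle calls on the augmented item space remain polynomial, and that the DNS monotonicity invariant survives the non-zero initialization (specifically, that the inequality $q_i^{(j)}\le a_j(i)$ still yields $a_j(i)\ge c_i(k_i^\star)$), is the most delicate aspect of the proof.
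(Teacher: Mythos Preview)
Your proposal is correct and is essentially the same approach as the paper's: both run the Dobzinski--Nisan--Schapira greedy on an enlarged instance in which the seller's marginal costs $c_i(k)$ serve as base prices for successive copies, and both bound OPT via demand optimality and bound ALG via the XoS-supporting-prices property. The paper represents this with a ``dummy buyer $0$'' who always offers a fresh copy of good $i$ at the current marginal cost and then proves two inductive lemmas (one bounding $\sum_j[v_j(O_j)-p^{(j)}(O_j)]$, one bounding $\sum_j p^{(j)}(O_j)$, the latter via a case split on $x_i$ versus $x_i^{\mathrm{OPT}}$), whereas your explicit-copy formulation lets you replace both lemmas by the single telescoping identity $\sum_j u_j=\sum_{i,k}(p_{i,k}^\star-c_i(k))$; your observation that untouched copies contribute zero is exactly what collapses the paper's case analysis.
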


\begin{algorithm}[htbp]
\caption{Combinatorial Algorithm for allocating goods to XoS buyers}
\label{alg_genprocedurebody}
\renewcommand{\algorithmicrequire}{}
\renewcommand{\algorithmiccomment}[1]{// \textit{#1}}
\begin{algorithmic}[1]
\vspace{2mm}
\REQUIRE \textbf{Initialization Phase}
\vspace{2mm}
\bindent
\STATE $S_0 \gets \mathcal{I}$; $S_1, S_2, \ldots, S_N \gets \emptyset$ \COMMENT{Initial Allocation} 
\STATE Set $p_i = c_i(1)$ for all $i \in \mathcal{I}$,  \COMMENT{Initial Prices}
\STATE Set $X_i = \{0\}$ for all $i \in \mathcal{I}$ \COMMENT{$X_i$: set of buyers to whom good $i$ has been allocated.}
\STATE Set $q_j(i) = 0$ for all goods $i$, buyers $j$ \COMMENT{$q_j(i)$: price assigned to item $i$ by buyer $j$}
\STATE Set $y_i = 0$ for all $i \in \mathcal{I}$ \COMMENT{$y_i$: identity of buyer offering good $i$ at the cheapest price}
\STATE Set initial prices on the items, $p_t = p^*_t$.
\eindent
\vspace{2mm}
\REQUIRE \textbf{Allocation Phase}: For each buyer $i \in 1,2, \ldots, N$
\vspace{2mm}
\bindent
\STATE Let $S'_i$ be the bundle demanded by buyer $i$ at prices $\vec{p}$; set $S_i = S'_i$
\STATE Let $(x^i_j)_{j \in S'_i}$ denote the maximizing XoS clause for buyer $i$ and set $S'_i$.
\STATE \textbf{Updation:} For all $j \in S'_i$:  
\bindenttwo
\STATE If $y_j \neq 0$, $S_{y_j} = S_{y_j} \setminus \{j\}$, $X_j = X_j \cup \{i\} \setminus \{y_j\}$.
\STATE $q_i(j) = x^i_j$ and $q_0(j) = c_j(|X_j|)$. 
\STATE Set $p_j = \min_{i \in X_j} q_i(j)$; set $x_j = \argmin_{i \in X_j} q_i(j)$.
\eindenttwo
\eindent
\end{algorithmic}
\end{algorithm}

The obvious similarities between the algorithm and a (dynamic) posted-price mechanism are hard to miss. The functioning of Algorithm~\ref{alg_genprocedurebody} can interpreted as follows: the market consists of $N+1$ buyers (buyer $0$ is the seller) who arrive sequentially. Each new buyer can acquire their maximum utility bundle by purchasing each item from either the seller (at the current marginal production) or from other buyers (at a price derived from the XoS clause corresponding to their original bundle). We defer the proof of the approximation guarantee to the Appendix.

The main implication of Theorem~\ref{thm:otfbayesin} and Claim~\ref{clm_allocationalg} is a computationally efficient posted pricing mechanism whose welfare guarantee is one-fourth that of the optimum allocation.

\begin{corollary}
We can compute in poly-time a price vector $\vec{p^*}$ and supply vector $\vec{k^*}$ such that for any arrival order, the expected welfare of the posted price mechanism with these prices is at least one-fourth of the optimum social welfare.
\end{corollary}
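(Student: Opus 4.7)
The plan is to obtain the corollary by a straightforward composition of Theorem~\ref{thm:otfbayesin} with the algorithm from Claim~\ref{clm_allocationalg}. Concretely, I would instantiate the black-box reduction of Theorem~\ref{thm:otfbayesin} using the $2$-approximation algorithm $Alg$ guaranteed by Claim~\ref{clm_allocationalg}. The reduction is described as running in polynomial time (modulo the sampling remark about $\mathcal{F}$), and the allocation algorithm itself is polynomial, so the overall pricing procedure remains polynomial in the input size.

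Next, I would chain the two approximation factors. Since on every realization $\vec{v}$, Claim~\ref{clm_allocationalg} gives $SW(Alg(\vec{v})) \geq \tfrac{1}{2}\,OPT(\vec{v})$, taking expectations over $\vec{v}\sim\mathcal{F}$ yields
\[
\mathbb{E}_{\vec{v}\sim\mathcal{F}}\bigl[SW(Alg(\vec{v}))\bigr] \;\geq\; \tfrac{1}{2}\,\mathbb{E}_{\vec{v}\sim\mathcal{F}}[OPT(\vec{v})].
\]
Theorem~\ref{thm:otfbayesin} then guarantees a price vector $\vec{p^*}$ whose induced on-the-fly mechanism achieves expected welfare at least $\tfrac{1}{2}\,\mathbb{E}[SW(Alg(\vec{v}))]$ for any adversarial arrival order, so combining the two bounds delivers at least $\tfrac{1}{4}\,\mathbb{E}[OPT]$, as desired.

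To specify the supply vector $\vec{k^*}$ required by the on-the-fly mechanism, I would read it off from the reduction: for each good $i$, set $k^*_i$ to be the maximum (over valuation profiles in the support of $\mathcal{F}$) number of copies of $i$ that $Alg$ allocates, so that the upper-bound constraint never binds on a ``typical'' realization used in the welfare analysis of Theorem~\ref{thm:otfbayesin}. Since $Alg$ runs in polynomial time and the reduction already enumerates (or samples) the support of $\mathcal{F}$ to compute $\vec{p^*}$, extracting $\vec{k^*}$ adds no asymptotic overhead.

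The main (and really only) obstacle is a bookkeeping one: making sure the composition respects all the qualifiers of Theorem~\ref{thm:otfbayesin} — namely that $Alg$ is deterministic (or analyzed in expectation in a compatible way), that access to the demand and XoS oracles used by $Alg$ is the same access required by the reduction, and that the polynomial-support assumption under which Theorem~\ref{thm:otfbayesin} is stated (or its sampling variant with additive loss $\epsilon$) is carried through. None of these is substantive, so the corollary follows immediately by stitching the two results together.
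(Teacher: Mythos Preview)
Your composition argument is exactly right and matches the paper: instantiate Theorem~\ref{thm:otfbayesin} with the $2$-approximation algorithm of Claim~\ref{clm_allocationalg}, take expectations, and multiply the two factors to get $\tfrac{1}{4}$. That is all the paper does for this corollary.

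However, your paragraph on the supply vector contains a genuine error. You propose setting $k^*_i$ to be the \emph{maximum} over valuation profiles of the number of copies of $i$ that $Alg$ allocates, so that the supply cap ``never binds.'' This is not what the reduction does, and it would in fact break the welfare guarantee. The proof of Theorem~\ref{thm:otfbayesin} sets $k^*_i = \mathbb{E}_{\vec v}[k_i(\vec v)]$ (randomizing if non-integral), and the price $p^*_i$ is calibrated to this choice. The key condition that makes the analysis go through is Condition~1 of Lemma~\ref{lem_structural}, namely $p^*_i \geq C_i(k^*_i)/k^*_i$; by convexity this ensures the seller's profit is non-negative no matter how many copies $t \leq k^*_i$ are actually sold. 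If you inflate $k^*_i$ to the maximum while keeping the same $p^*_i$, then $C_i(k^*_i)/k^*_i$ can become arbitrarily larger than $p^*_i$ (convex costs grow superlinearly), and on realizations where many copies are sold the seller can incur an unbounded loss that wipes out the surplus bound. Your intuition that ``the constraint never binds'' helps only on the buyer-surplus side; it is precisely the supply cap that protects the profit side.

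So the fix is simply to read off $k^*_i$ as the reduction actually defines it (the expected allocation), not the maximum. With that correction, your proof is complete and identical to the paper's.
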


\noindent\textbf{Proof of Theorem~\ref{thm:otfbayesin}} Before delving into the actual proof, we derive our framework for posted pricing in the presence of production. The framework consists of two ingredients: first, we provide a black-box technique to derive prices for a single good that lead to high welfare in the presence of production. Second, we provide some sufficient conditions on the posted prices that result in good mechanisms even for general combinatorial valuations. In later sections, we show how to bridge these two components to get a good mechanism as well as extend it to the Bayesian setting.

\subsubsection*{Framework - Part 1: How to price a single item when there are $k$ copies}
\label{framework}
Consider a full information market with a single good $i$. Suppose we allocate $k$ units of this goods to buyers whose valuations for the good are $v_1 \geq v_2 \geq \ldots \geq v_k$. The total welfare due to this particular allocation is $\sum_{n=1}^k [v_n - c_i(n)]$. We are not interested in negative welfare, so we assume that $\sum_{n=1}^k v_n \geq \sum_{n=1}^k c_i(n).$ Define the following functions: 
$$f_i(p) = \sum_{n=1}^k [v_n - p], \quad \text{and} \quad \pi_i(p) = pk - \sum_{n=1}^k c_i(n).$$

$f_i(p)$ is the buyer surplus that can be obtained at a price of $p$ if the same $k$ buyers as above decide to purchase the good, and $\pi_i(p)$, the profit (revenue minus cost) if all $k$ copies are sold (to any buyers). Next, we observe that both the functions are continuous, monotone, and $f_i(0) \geq \pi_i(0)$ whereas $f_i(v_1) \leq \pi_i(v_1)$. So, there must exist some price $p^*_i \in [0,v_1]$ satisfying the following condition.

\begin{equation}
\label{eqn_profsurp}
\text{(Profit-Surplus Equivalence)} f_i(p^*_i) = \pi_i(p^*_i).
\end{equation}

We now mention the simple yet crucial lemma that characterizes this price $p^*_i$.

\begin{lemma}
\label{lem_revsocwel}
The total profit if all $k$ items are sold at price $p^*_i$ is exactly half the social welfare due to $i$, i.e., $\pi_i(p^*_i)= kp^*_i - \sum_{n=1}^k c_i(n) = \frac{1}{2}\left(\sum_{n=1}^k [v_n - c_i(n)]\right)$.
\end{lemma}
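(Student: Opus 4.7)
The plan is to observe that the profit function and buyer-surplus function sum to exactly the social welfare, independent of the price, and then use the defining condition of $p^*_i$.

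First, I would write out the two functions explicitly:
\[
f_i(p) = \sum_{n=1}^k v_n - kp, \qquad \pi_i(p) = kp - \sum_{n=1}^k c_i(n).
\]
Next, the key step is to notice that adding these two quantities yields a price-independent expression:
\[
f_i(p) + \pi_i(p) = \sum_{n=1}^k v_n - \sum_{n=1}^k c_i(n),
\]
which is precisely the social welfare generated by allocating $k$ copies of good $i$ to the top-$k$ buyers. The $kp$ terms cancel, reflecting the intuition that whatever the buyers collectively transfer to the seller as revenue is exactly the seller's extra profit.

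Finally, I would plug in $p = p^*_i$ and invoke the defining equation~\eqref{eqn_profsurp}, i.e., $f_i(p^*_i) = \pi_i(p^*_i)$. Combined with the identity above, this gives $2\pi_i(p^*_i) = \sum_{n=1}^k[v_n - c_i(n)]$, from which the lemma follows immediately. There is no real obstacle here; the statement is essentially an algebraic consequence of the way $p^*_i$ was chosen, and the only conceptual content is the observation that the surplus-plus-profit is a conserved quantity equal to the social welfare regardless of how it is split between buyers and seller.
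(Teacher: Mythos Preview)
Your proposal is correct and follows essentially the same approach as the paper: decompose the social welfare as $f_i(p)+\pi_i(p)$, then apply the profit--surplus equivalence $f_i(p^*_i)=\pi_i(p^*_i)$ to conclude $SW(i)=2\pi_i(p^*_i)$. The only cosmetic difference is that you state the identity for general $p$ first and then specialize, whereas the paper plugs in $p^*_i$ directly.
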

\begin{proof}
The social welfare due to $i$ (taking the allocation for granted) is $SW(i) = \sum_{n=1}^k [v_n - c_i(n)]$. Therefore, 
$$ SW(i) = \sum_{n=1}^k [v_n - p^*_i] + \sum_{n=1}^k [p^*_i - c_i(n)] = f_i(p^*_i) + kp^*_i - \sum_{n=1}^k c_i(n) = \pi_i(p^*_i) + \pi_i(p^*_i).$$ 

The final equation comes from the Profit-Surplus Equivalence at $p^*_i$.\end{proof}

The next lemma states that even if we don't sell all $k$ items, the seller's profit at this price cannot be negative. We defer its proof to the Appendix.

\begin{lemma}
\label{lem_posprofit}
For all $t \leq k$, the profit that the seller makes by selling $t$ copies of good $i$ at price $p^*_i$ is at least the cost of producing these $t$ copies, i.e., $t p^*_i \geq \sum_{n=1}^t c_i(n)$.
\end{lemma}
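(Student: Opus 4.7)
The plan is to derive the inequality $t\, p^*_i \geq \sum_{n=1}^t c_i(n)$ from two ingredients: a ``baseline'' inequality at $t = k$ that follows from Lemma~\ref{lem_revsocwel}, and a monotonicity property of the average cost that comes from convexity of $C_i$.

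First I would establish the baseline. By Lemma~\ref{lem_revsocwel}, $\pi_i(p^*_i) = k p^*_i - \sum_{n=1}^k c_i(n) = \tfrac{1}{2}\sum_{n=1}^k[v_n - c_i(n)]$. The standing assumption that the welfare due to $i$ is non-negative, $\sum_{n=1}^k v_n \geq \sum_{n=1}^k c_i(n)$, therefore yields $k p^*_i \geq \sum_{n=1}^k c_i(n)$, i.e.\ $p^*_i$ is at least the average marginal cost over the first $k$ copies. This handles the case $t = k$.

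Next I would argue that the average marginal cost is non-decreasing in the number of copies, so pushing $t$ below $k$ only makes the right-hand side smaller on a per-copy basis. Concretely, because $C_i$ is convex, the marginal costs satisfy $c_i(1) \leq c_i(2) \leq \cdots \leq c_i(k)$. A standard observation is that whenever a sequence is non-decreasing, so is its running average: for $t < k$,
\[
\frac{1}{t}\sum_{n=1}^t c_i(n) \;\leq\; \frac{1}{k}\sum_{n=1}^k c_i(n),
\]
which one sees by noting that every term $c_i(n)$ with $n > t$ in the right-hand average is at least as large as each of the first $t$ terms. Combined with the baseline $p^*_i \geq \tfrac{1}{k}\sum_{n=1}^k c_i(n)$, this gives $p^*_i \geq \tfrac{1}{t}\sum_{n=1}^t c_i(n)$, and multiplying through by $t$ yields the claim.

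I do not anticipate any real obstacle here: the only substantive content is the ``non-decreasing sequence implies non-decreasing running average'' step, and even that reduces to comparing each later term to the current average. The proof is essentially one line once Lemma~\ref{lem_revsocwel} and convexity of $C_i$ are invoked.
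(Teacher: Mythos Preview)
Your proof is correct and uses the same two ingredients as the paper's: the baseline $p^*_i \geq \tfrac{1}{k}\sum_{n=1}^k c_i(n)$ from Lemma~\ref{lem_revsocwel} together with non-negative welfare, and monotonicity of the marginal costs $c_i(n)$. The only difference is presentational: the paper argues by contradiction (if $\sum_{n=1}^t[p^*_i - c_i(n)] < 0$ then $p^*_i - c_i(t) < 0$, hence all later terms are negative too, contradicting the baseline), whereas your direct ``running average of a non-decreasing sequence is non-decreasing'' step is a bit cleaner.
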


\subsubsection*{Framework - Part 2: Sufficient Conditions for Good Mechanisms}
Previously, we derived a simple procedure for computing prices that satisfy certain `nice' conditions. Setting aside the problem of price computation, we now try to understand how one can utilize these prices to design mechanisms with high social welfare for XoS buyer valuations. Suppose that $Alg$ is some given algorithm for the allocation problem and that $Alg(\vec{v}) = (A_1(\vec{v}), A_2(\vec{v}), \ldots, A_N(\vec{v}))$ denotes the output allocation for valuation profile $\vec{v}$. Moreover, suppose that under this allocation, $N_i(\vec{v})$ represents the set of buyers to whom good $i$ is allocated, and $k_i(\vec{v}) = |N_i(\vec{v})|$. Fixing $\vec{v}$, let $o_j$ be the (maximizing) XoS clause for buyer $j$ in the allocation $Alg(\vec{v})$, i.e., $o_j$ is the additive clause that maximizes $o_j(A_j(\vec{v}))$.  Then, we can define the total value derived from good $i$, $V_i(\vec{v}) = \sum_{j \in N_i(\vec{v})}o_j(i)$, and also $V_i = E_{\vec{v} \sim\mathcal{F}}[V_i(\vec{v})]$. Similarly, we can also define the quantity representative of social welfare $SW_i(\vec{v}) = \sum_{j \in N_i(\vec{v})}o_j(i) - \sum_{n=1}^{k_i(\vec{v})}c_i(n)$, and its expected value $SW_i$. The following is our main structural lemma that holds for XoS combinatorial valuations derived from production distribution $\mathcal{F}$.

\begin{lemma}
\label{lem_structural}
Given an approximation algorithm $Alg$, suppose that $\exists$ a price vector $\vec{p^*}$ and a supply vector $\vec{k^*}$, satisfying the following conditions
\begin{enumerate}
\item for every good $i$, $p^*_ik^*_i - C_i(k^*_i) \geq 0$.

\item for every good $i$, $V_i - p^*_iE_{\vec{v} \sim \mathcal{F}}[k_i(\vec{v})] \geq \frac{1}{\alpha}SW(i)$

\item for every good $i$, $p^*_ik^*_i - C_i(k^*_i) \geq \frac{1}{\alpha}SW(i)$

\end{enumerate}
Then, an on the fly mechanism with prices $(p^*_1, p^*_2, \ldots, p^*_M)$ and supply $(k^*_1, \ldots, k^*_M)$ on the goods results in a welfare of $\frac{E_{\vec{v} \sim \mathcal{F}}[SW(Alg(\vec{v}))]}{\alpha}$.  

\end{lemma}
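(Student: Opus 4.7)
Decompose the mechanism's welfare as
\[
SW(OTF)(\vec{v}) \;=\; \sum_{j} U_j(\vec{v}) \;+\; \sum_{i} \bigl(p^*_i x_i(\vec{v}) - C_i(x_i(\vec{v}))\bigr),
\]
where $U_j$ is buyer $j$'s utility and $x_i(\vec{v})$ is the realized sales of good $i$; let $W(\vec{v}) := \{i : x_i(\vec{v}) = k^*_i\}$ denote the set of eventually exhausted items. I will lower bound the two sums using Conditions 3 and 2 respectively.

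For the profit term, Condition 1 together with the convexity of $C_i$ (which makes $C_i(t)/t$ non-decreasing) implies $p^*_i t \geq C_i(t)$ for every $t \leq k^*_i$, so every item contributes non-negatively and we may safely discard the non-exhausted ones. Condition 3 then yields $E[\sum_i (p^*_i x_i - C_i(x_i))] \geq \sum_i \tfrac{1}{\alpha} SW_i \cdot \Pr[i \in W]$. For buyer utility, items outside $W(\vec{v})$ are available at every buyer's arrival, so $A_j(\vec{v}) \setminus W(\vec{v})$ is always a feasible bundle for $j$. Combining the demand oracle property with the XoS clause $o_j$ chosen for $A_j(\vec{v})$ gives $U_j \geq \sum_{i \in A_j \setminus W}(o_j(i) - p^*_i)$. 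Summing over $j$ and rewriting the restricted sum as the full sum minus its restriction to $\{i \in W\}$, Condition 2 produces
\[
E\Big[\sum_j U_j\Big] \;\geq\; \tfrac{1}{\alpha}\sum_i SW_i \;-\; \sum_i E\bigl[(V_i(\vec{v}) - p^*_i k_i(\vec{v}))\,\mathbf{1}[i \in W]\bigr].
\]

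Adding the two bounds, the desired inequality $E[SW(OTF)] \geq \tfrac{1}{\alpha}\sum_i SW_i$ reduces to showing that the ``excess surplus'' captured on the exhaustion event is dominated by the corresponding profit,
\[
\sum_i E\bigl[(V_i(\vec{v}) - p^*_i k_i(\vec{v}))\,\mathbf{1}[i \in W]\bigr] \;\leq\; \sum_i (p^*_i k^*_i - C_i(k^*_i))\Pr[i \in W].
\]
Establishing this inequality is the main obstacle. The guiding heuristic is the single-item profit--surplus equivalence of Lemma~\ref{lem_revsocwel} lifted to the Bayesian combinatorial setting: once item $i$ exhausts, the mechanism has already banked a profit of at least $\tfrac{1}{\alpha}SW_i$ by Condition 3, which should more than match the algorithm's conditional per-item surplus. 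I expect to close the argument by an item-by-item conditional comparison, either extracting it directly from Conditions 2 and 3 or by charging each late-arriving buyer's lost value against the profit already collected on the $k^*_i$ copies sold before exhaustion.
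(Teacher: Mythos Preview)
Your profit bound and the algebra leading to the displayed surplus inequality are fine, but the final ``excess surplus'' inequality you isolate is \emph{false} under Conditions~1--3 alone, so the plan cannot be completed as stated. The obstruction is correlation: the event $\{i\in W(\vec v)\}$ and the random variable $V_i(\vec v)-p^*_i k_i(\vec v)$ both depend on the \emph{same} draw $\vec v$, and they tend to be positively correlated (item~$i$ sells out precisely on realizations where demand---hence $V_i(\vec v)$---is large). A one-item, one-buyer example already breaks it: value $V$ or $0$ with probability $\tfrac12$, zero cost, $Alg$ allocates iff the value is $V$, $k^*_i=1$, $p^*_i=V/4$, $\alpha=2$. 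All three conditions hold, yet $E[(V_i-p^*_i k_i)\mathbf 1_{i\in W}]=3V/8$ while $(p^*_ik^*_i-C_i(k^*_i))\Pr[i\in W]=V/8$. Your combined bound then degenerates to $E[SW(OTF)]\geq 0$.

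The paper breaks this correlation with a ghost-sample argument: for each buyer $j$, draw an \emph{independent} profile $\vec a_{-j}\sim\mathcal F_{-j}$, set $\vec a=(v_j,\vec a_{-j})$, and lower bound $U_j$ by the utility of the bundle $A_j(\vec a)\setminus Sold_j(\vec v)$ rather than $A_j(\vec v)\setminus W(\vec v)$. Because $Sold_j(\vec v)$ depends only on $\vec v_{-j}$ while $A_j(\vec a)$ and the XoS clause $o_j$ depend only on $(v_j,\vec a_{-j})$, the indicator $\mathbf 1[i\notin Sold_j(\vec v)]$ is independent of $\mathbf 1[j\in N_i(\vec a)](o_j(i)-p^*_i)$. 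The expectation then factors cleanly into $\Pr[i\text{ not sold out}]\cdot(V_i-p^*_iE[k_i(\vec v)])$, and Condition~2 gives the surplus bound $\tfrac{1}{\alpha}\sum_i \Pr[i\text{ not sold out}]\,SW_i$ directly---no residual inequality to prove. Adding your (correct) profit bound $\tfrac{1}{\alpha}\sum_i \Pr[i\text{ sold out}]\,SW_i$ finishes the argument. The missing idea is precisely this decoupling via an independent copy; your ``item-by-item conditional comparison'' cannot substitute for it because Conditions~2--3 only control unconditional expectations.
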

\begin{proof}
Fix some valuation profile $\vec{v} \sim \mathcal{F}$, and a buyer $j$. Suppose that for this valuation, $Sold_j(\vec{v})$ is the set of goods that are sold out (supply limit reached) when buyer $j$ arrives. Finally, consider another valuation profile $\vec{a}_{-j}$ for buyers other than $j$, so that this valuation is independent of $\vec{v}$. Define $\vec{a} = (v_j, a_{-j})$, and let $A_j(\vec{a})$ be the set of goods allocated to $j$ by $Alg$ for $\vec{a}$ with the corresponding (maximizing) XoS clause being $o_j$, i.e., $v_j(A_j(\vec{a})) = o_j(A_j(\vec{a}))$.

Under the fixed valuation $\vec{v}$ and for each $\vec{a}_{-j}$, the buyer $j$ could have purchased the set of goods in $A_j(\vec{a}) \setminus Sold_j(\vec{v})$. Therefore, the utility of this buyer under the fixed valuation is at least 

$$E_{\vec{a}_{-j}}[\sum_{i \in A_j(\vec{a}) \setminus Sold_j(\vec{v})} o_j(i) - p^*_i] = E_{\vec{a}_{-j}}[\sum_{i \in \mathcal{I}}\mathbbm{1}{(j \in N_i(\vec{a}))}.  \mathbbm{1}(i \notin Sold_j(\vec{v})).o_j(i) - p^*_i].$$ 

Adding up this quantity for all buyers and taking the expectation under $\mathcal{F}$, we get that the expected total surplus (buyer utility) is

\allowdisplaybreaks
\begin{align*}
E_{\vec{v} \sim \mathcal{F}}[\sum_{j =1}^N u_j(\vec{v})] & \geq \sum_{i \in \mathcal{I}}\sum_{j =1}^N E_{\vec{v}, \vec{a}_{-j}}[\mathbbm{1}{(j \in N_i(\vec{a}))}.  \mathbbm{1}(i \notin Sold_j(\vec{v})).o_j(i) - p^*_i] \\
& = \sum_{i \in \mathcal{I}}\sum_{j =1}^N E_{\vec{v}}[\mathbbm{1}(i \notin Sold_j(\vec{v}))] E_{\vec{a}} [ \mathbbm{1}(j \in N_i(\vec{a})).o_j(i) - p^*_i ] \\
& \geq \sum_{i \in \mathcal{I}}Pr(\text{i is not sold out}) E_{\vec{v}}[V_i(\vec{v}) - k_i(\vec{v})p^*_i] \\ 
& = \sum_{i \in \mathcal{I}}Pr(\text{i is not sold out}) (E_{\vec{v}}[V_i(\vec{v})] - p^*_iE_{\vec{v}}[k_i(\vec{v})]) \\ 
& \geq \frac{1}{\alpha}\sum_{i \in \mathcal{I}}Pr(\text{i is not sold out})SW(i).
\end{align*}

The second line comes from the observation that $Sold_j(\vec{v})$ does not depend on $v_j$ or $\vec{a}_{-j}$ whereas the other terms such as $N_i(\vec{a})$ and $o_j(i)$ depend only on $\vec{a}$. The final equality follows from condition (2) of the lemma statement. 

Moving on to the profit, for any fixed valuation $\vec{v}$, let $S(\vec{v})$ and $U(\vec{v})$ be the set of items that are sold out (all $k^*_i$ copies), and the ones that are not respectively. Suppose that for any $\vec{v}$, $t_i(\vec{v})$ is the number of copies of good $i$ sold by the mechanism. Then, the expected profit is at least $$E_v [\sum_{i \in S(\vec{v})}(k^*_ip^*_i - \sum_{n=1}^{k^*_i}c_i(n)) + \sum_{i \in U(\vec{v})}t_i(\vec{v})p^*_i - \sum_{n=1}^{t_i(\vec{v})}c_i(n))] .$$

We claim that for every $\vec{v}$, the profit due to the unsold goods is non-negative. This is a consequence of the first condition of the lemma statement according to which for every good $i$, $p^*_i \geq \frac{C_i(k^*_i)}{k^*_i}$. Moreover, as per the workings of on the fly mechanisms, $t_i(\vec{v})$ is always smaller than or equal to $k^*_i$. Finally, using the convexity of the cost function, we have that for all $1 \leq t \leq k^*_i$, $p^*_i \geq \frac{C_i(k^*_i)}{k^*_i} \geq \frac{C_i(t)}{t}$.

Therefore, we can safely ignore the profit due to the unsold items. The expected profit is then bounded from below by $\sum_{i \in \mathcal{I}}E_v[\mathbbm{1}(\text{i is sold out})k^*_ip^*_i - \sum_{n=1}^{k^*_i}c_i(n)] = \sum_{i \in \mathcal{I}}Pr(\text{i is sold out})(k^*_ip^*_i - C_i(k^*_i))$. Making use of the third condition in the lemma statement, we get that 
$$\sum_{i \in \mathcal{I}}Pr(\text{i is sold out})(k^*_ip^*_i - C_i(k^*_i)) \geq \frac{1}{\alpha}\sum_{i \in \mathcal{I}}Pr(\text{i is sold out})SW(i).$$

Now, we can add up our lower bounds on the expected surplus and expected profit to obtain an aggregate lower bound on the expected social welfare of
$$\frac{1}{\alpha}\sum_{i \in \mathcal{I}}Pr(\text{i is not sold out})SW(i) + \frac{1}{\alpha}\sum_{i \in \mathcal{I}}Pr(\text{i is sold out})SW(i) = \frac{1}{\alpha}E_{\vec{v} \sim \mathcal{F}}[SW(Alg(\vec{v}))].$$

Using the final condition in the lemma statement allows us to complete the proof. \end{proof}

\subsubsection*{Final leg: Showing the Black-Box Result}
Armed with our framework from the previous sections, we are now ready to prove our black-box result that takes as input a distribution $\mathcal{F}$ and an allocation algorithm $Alg$, and shows how to compute prices that result in an OTF mechanism with desirable guarantees. Specifically, we will compute prices based on the first part of the framework and then show that a mechanism with those prices satisfies the conditions of Lemma~\ref{lem_structural}. We use the same notation as defined previously.


Fix a valuation profile $\vec{v}$: for any good $i$, we can compute the price $p^*_i(\vec{v})$ that results in profit-surplus equivalence with respect to $Alg(\vec{v})$. Explicitly solving for Equation~\ref{eqn_profsurp}, we get that $p^*_i (\vec{v}) = \frac{1}{2k_i(\vec{v}))} \{V_i(\vec{v}) + C_i(k_i(\vec{v}))\}$.  Taking the expectation over $\mathcal{F}$ yields the actual parameters for our mechanism.

\begin{align*}
k^*_i & = E_{\vec{v} \sim \mathcal{F}}[k_i(\vec{v})] \quad  \text{and} \quad p^*_i = \frac{V_i}{2k^*_i}+ \frac{E_{\vec{v} \sim \mathcal{F}}[C_i(k_i(\vec{v})]}{2k^*_i} & i \in \mathcal{I}
\end{align*}

We also make the simplifying assumption that for every good $i$, $V_i - E_{\vec{v} \sim \mathcal{F}}[C_i(k_i(\vec{v}))] = SW(i) \geq 0$. This is without loss of generality since one could always drop goods for which $SW(i) < 0$ without any loss in social welfare. Now, we have formally defined the supply constraints ($(k^*_i)_{i \in \mathcal{I}}$) and per-good prices $((p^*_i)_{i \in \mathcal{I}})$ for our on the fly mechanism. We remark here on the integrality of $k^*_i$. In the event that $k^*_i$ is not integral, we can set the supply of good $i$ to be a random variable drawn independently according to the distribution of $k_i(\vec{v})$. In order to keep the notation from becoming messy, we prove the theorem for the deterministic supply case although the proof extends to the case where the supply is a random variable. Note that $p^*_i$ is still deterministic.

It only remains to prove that these prices and supply constraints satisfy the conditions of Lemma~\ref{lem_structural} with $\alpha=2$.
\begin{enumerate}
\item Condition (1): $p^*_ik^*_i - C_i(k^*_i) \geq 0$.

By definition, $p^*_ik^*_i - C_i(k^*_i) = \frac{V_i}{2} + \frac{1}{2} E_{\vec{v} \sim \mathcal{F}}[C_i(k_i(\vec{v})] - C_i(k^*_i)$. Since the cost functions are convex, we can apply a form of Jensen's inequality (see Lemma~\ref{app_lemma_convexexpectation}) and get that $E_{\vec{v} \sim \mathcal{F}}[C_i(k_i(\vec{v})] \geq C_i(E_{\vec{v} \sim \mathcal{F}}[k_i(\vec{v})]) = C_i(k^*_i)$. Therefore, we get that $p^*_ik^*_i - C_i(k^*_i) \geq \frac{V_i}{2} - \frac{E_{\vec{v} \sim \mathcal{F}}[C_i(k_i(\vec{v})]}{2} = \frac{1}{2}SW(i) \geq 0$.

\item Condition (2): $V_i - p^*_iE_{\vec{v} \sim \mathcal{F}}[k_i(\vec{v})] \geq \frac{1}{\alpha}SW(i)$.

As per the definition of $p^*_i$, we have that $p^*_ik^*_i = \frac{1}{2}(V_i + E_{\vec{v} \sim \mathcal{F}}[C_i(k_i(\vec{v}))])$. Taking the negative of this equation and adding $V_i$ to both sides gives us $V_i - p^*_ik^*_i = V_i - \frac{V_i}{2} - \frac{1}{2}E_{\vec{v} \sim \mathcal{F}}[C_i(k_i(\vec{v}))]) = \frac{1}{2}(V_i - E_{\vec{v} \sim \mathcal{F}}[C_i(k_i(\vec{v}))])) = \frac{1}{2}SW(i).$ 
%
%

\item Condition (3): $p^*_ik^*_i - C_i(k^*_i) \geq \frac{1}{\alpha}SW(i)$.

From the proof of Condition (1), we have that $p^*_ik^*_i - C_i(k^*_i) \geq \frac{1}{2}SW(i)$. $\qed$
\end{enumerate}

\section{Commitment Mechanisms and XoS Valuations}
\label{sec:commitment}
We move on to the much harder class of mechanisms, where the seller has to initially commit to producing a fixed amount of each good, and therefore, incurs a loss on every item that is not sold. Before showing our result, we formally define the notion of an $\alpha$-bounded solution Given an instance of our problem and an algorithm $Alg$, using the same notation as in the proof of Theorem~\ref{thm:otfbayesin}, the expected social welfare provided by the algorithm is $E_{\vec{v} \sim F}[SW(Alg(\vec{v}))] = E_{\vec{v} \sim F}[\sum_{j=1}^N v_j(A_j(\vec{v})) - \sum_{i \in \mathcal{I}}C_i(k_i(\vec{v}))]$. Then, the instance is said to be \emph{$\alpha$-bounded} with respect to $Alg$ for $\alpha \geq 1$ $if E_{\vec{v} \sim F}[\sum_{j=1}^N v_j(A_j(\vec{v}))] \geq E_{\vec{v} \sim F}[\sum_{i \in \mathcal{I}}C_i(k_i(\vec{v}))]$.

$\alpha$-boundedness captures the ratio of the social value in the computed allocation with respect to the cost of production, or equivalently, the \emph{recoverable welfare} in terms of the initial investment. In previous work~\cite{feigeIMN13}, $\alpha$-boundedness was defined in terms of the optimum solution for a given instance; on the contrary, our definition depends both on the instance and the benchmark algorithm.

%

\begin{theorem}
\label{thm:commitment}
\begin{enumerate}

\item Given an allocation algorithm $Alg$, for every instance with fractionally subadditive buyers that is $\alpha$-bounded with respect to $Alg$ for $\alpha \in [2,\infty]$, there exists a commitment mechanism that extracts a welfare of $E_{\vec{v} \sim \mathcal{F}}[\frac{SW(Alg)}{2}]\frac{\alpha-2}{\alpha-1}$
 
\item For any given instance with fractionally subadditive buyers, we can compute posted prices in poly-time such that the resulting commitment mechanisms always guarantees a social welfare within factor $4\frac{\alpha-1}{\alpha-2}$ of the optimum allocation, as long as the instance is $\alpha$-bounded with respect to the algorithm from Claim~\ref{clm_allocationalg}.
\end{enumerate}
\end{theorem}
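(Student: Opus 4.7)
The plan is to parallel the two-part architecture of Theorem~\ref{thm:otfbayesin}: first state a commitment-specific analog of the structural Lemma~\ref{lem_structural}, then exhibit prices and supplies satisfying its hypotheses. The essential conceptual change is that a commitment mechanism pays $C_i(k^*_i)$ up front whether or not all copies sell, so the Lemma~\ref{lem_posprofit} guarantee that partial sales are automatically profit-positive no longer offsets the lost revenue on unsold copies; the $\alpha$-boundedness hypothesis is what takes over the job of absorbing that sunk-cost loss.

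For the mechanism itself, I would reuse the supply $k^*_i = E_{\vec{v}\sim\mathcal{F}}[k_i(\vec{v})]$ from Theorem~\ref{thm:otfbayesin} -- which continues to give $C_i(k^*_i) \le E_{\vec{v}\sim\mathcal{F}}[C_i(k_i(\vec{v}))]$ via the Jensen-type Lemma~\ref{app_lemma_convexexpectation} -- and shift the OTF price by a scalar factor $\beta = \beta(\alpha) \ge 1$, setting $p^*_i = \beta \cdot \frac{V_i + E_{\vec{v}\sim\mathcal{F}}[C_i(k_i(\vec{v}))]}{2 k^*_i}$. The buyer-side argument from Theorem~\ref{thm:otfbayesin} should transfer essentially verbatim, since it depends only on the fact that supply is fixed in advance and yields a lower bound on expected surplus in terms of $V_i - p^*_i k^*_i$ weighted by the probability that item $i$ is not sold out.

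The profit side is where the argument departs from OTF. Comparing commitment welfare to OTF welfare at the same prices, good-by-good, their difference is exactly the cost of unsold copies, $\sum_i \mathbbm{1}(i \text{ not sold out}) \cdot \big[C_i(k^*_i) - \sum_{n=1}^{t_i(\vec{v})} c_i(n)\big]$, which is at most $\sum_i C_i(k^*_i)$. Jensen then yields $\sum_i C_i(k^*_i) \le E_{\vec{v}\sim\mathcal{F}}[\sum_i C_i(k_i(\vec{v}))]$, and $\alpha$-boundedness rewrites this total cost as a fraction of $E_{\vec{v}\sim\mathcal{F}}[SW(Alg(\vec{v}))]$ that scales as $\frac{1}{\alpha-1}$. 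Subtracting this loss from a suitably $\beta$-scaled OTF welfare guarantee and optimizing $\beta$ should produce the stated factor $\frac{1}{2} \cdot \frac{\alpha-2}{\alpha-1}$.

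The main obstacle is calibrating $\beta$: a larger $\beta$ shifts more revenue onto sold-out goods and better covers the sunk-cost loss, but shrinks buyer surplus (condition~(2) of Lemma~\ref{lem_structural}); a smaller $\beta$ preserves surplus but leaves the sunk cost uncovered. The range $\alpha \in [2,\infty]$ is exactly the regime in which these two demands can be met simultaneously, with the endpoint $\alpha=2$ forcing the welfare guarantee down to zero because no positive $\beta$ suffices. Part~(2) of the theorem then follows by substitution: plugging the $2$-approximation algorithm of Claim~\ref{clm_allocationalg} into the black-box bound from part~(1) gives a poly-time commitment mechanism whose approximation ratio against the welfare-optimal allocation is $4 \cdot \frac{\alpha-1}{\alpha-2}$.
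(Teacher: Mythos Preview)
Your proposal has the price adjustment pointing in the wrong direction, and this is not a calibration detail but the crux of the argument. You scale the OTF price \emph{up} by $\beta \ge 1$, reasoning that higher prices push more revenue onto sold-out goods to cover the sunk cost. But the sunk cost $C_i(k^*_i)$ bites precisely when good $i$ is \emph{not} sold out, and in that event the mechanism's welfare contribution from $i$ is buyer surplus minus $C_i(k^*_i)$; raising the price only shrinks that surplus further. The paper does the opposite: it \emph{lowers} the price to $p^*_i = \frac{V_i}{2k^*_i}$ (dropping the cost term from the OTF formula), chosen so that the reduced surplus $V_i - p^*_i k^*_i - C_i(k^*_i)$ equals the sold-out profit $p^*_i k^*_i - C_i(k^*_i)$. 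With both cases equalized at $\tfrac{V_i}{2} - C_i(k^*_i)$ per good, the sold-out probability drops out of the welfare lower bound entirely, and comparing $\sum_i\big(\tfrac{V_i}{2} - C_i(k^*_i)\big)$ to $E[SW(Alg)] \le \sum_i\big(V_i - C_i(k^*_i)\big)$ via $\alpha$-boundedness plus Jensen gives the ratio $\tfrac{2(\alpha-1)}{\alpha-2}$.

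Your fallback route---take the OTF guarantee at $\beta=1$ and subtract a global upper bound on the sunk cost---is valid but strictly weaker: it yields
\[
\Big(\tfrac{1}{2} - \tfrac{1}{\alpha-1}\Big)\,E_{\vec{v}\sim\mathcal{F}}[SW(Alg(\vec{v}))] \;=\; \tfrac{\alpha-3}{2(\alpha-1)}\,E_{\vec{v}\sim\mathcal{F}}[SW(Alg(\vec{v}))],
\]
not the claimed $\tfrac{\alpha-2}{2(\alpha-1)}$, and is nontrivial only for $\alpha \ge 3$ rather than $\alpha \ge 2$. Optimizing a single scalar $\beta$ cannot close this gap: the price that balances the sold-out and not-sold-out cases is $\frac{V_i}{2k^*_i}$, corresponding to a good-dependent multiplier $\beta_i = \frac{V_i}{V_i + E_{\vec{v}}[C_i(k_i(\vec{v}))]} < 1$, so no global $\beta$ recovers the paper's per-good pricing. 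Part~(2) is fine once part~(1) is corrected.
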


\noindent\textbf{Discussion:} Essentially the parameter $\alpha$ allows us to distinguish between \emph{good} and \emph{bad instances} of our problem. The performance of our commitment mechanism improves as $\alpha$ increases, and for large enough values of $\alpha$, approaches that of the on the fly mechanism. In many markets, it is reasonable to expect that the cost of producing the goods is not too large in comparison to the social value or recovered welfare~\cite{feigeIMN13}. Alternatively, if the production cost function is sufficiently non-linear, then the solution returned by our algorithm from Claim~\ref{clm_allocationalg} is always $\alpha$-bounded for a large enough value of $\alpha$ to ensure good welfare. We formally prove this in the Appendix. For example, if $c_i(n) = n^2$ and $Alg$ allocates at least $3$-copies of every good, then the resulting allocation is $\alpha$-bounded for $\alpha=2.5$ and posted pricing mechanism provides a welfare guarantee of $\frac{1}{6}E_{\vec{v}}[SW(Alg))]$.

\begin{proof} The proof is somewhat similar to that of the on the fly mechanism for XoS buyers where we defined profit and surplus functions for every item and selected a price $p$ so that the profit due to the sold items is exactly the welfare due to the unsold items. However, this idea is no longer true for commitment mechanisms because even when items are not sold out, welfare may be low due to the production costs. To compensate for the negative cost when items are unsold, the main idea is to reduce prices so that the higher surplus offsets the increased production costs when the items are unsold.  Once we select the correct price per good to compensate for production loss, the rest of the proof follows almost similar to that of Theorem~\ref{thm:otfbayesin}. 
We use the same notation as we did in the proof of Theorem~\ref{thm:otfbayesin}. Suppose that $Alg$ is some given algorithm for the allocation problem and that $Alg(\vec{v}) = (A_1(\vec{v}), A_2(\vec{v}), \ldots, A_N(\vec{v}))$ denotes the output allocation for valuation profile $\vec{v}$. Moreover, suppose that under this allocation, $N_i(\vec{v})$ represents the set of buyers to whom good $i$ is allocated, and $k_i(\vec{v}) = |N_i(\vec{v})|$. We also define the following quantities for each good $i$ as we did in that proof: $V_i(\vec{v}))$ (for all $\vec{v} \sim \mathcal{F}$), $V_i$ and $SW_i$, the latter two are in expectation.

Let us begin the proof by defining a reduced surplus function for each good $i$, 

$$h_i(p) = E_{\vec{v} \sim \mathcal{F}}[V_i(\vec{v}) - pk_i(\vec{v}) - C_i(k_i(\vec{v}))] = V_i - pk^*_i - E_{\vec{v}}[C_i(k_i(\vec{v}))],$$

where $k^*_i = E_{\vec{v} \sim \mathcal{F}}[k_i(\vec{v})]$. Recall the profit function, $\pi_i(p) = pk^*_i - E_{\vec{v}}[C_i(k_i(\vec{v}))].$ For every good $i \in \mathcal{I}$, we select its price $p^*_i$ to be the point where $h_i(p) = \pi_i(p)$, i.e., where reduced surplus and profit meet. Therefore,

$$p^*_i = \frac{1}{2k^*_i}V_i.$$

For the same valuations, our commitment mechanism has a smaller price for each good than the OTF mechanism in order to limit the possibility of unsold items. Our commitment mechanism works as follows: the seller commits to producing $k^*_i$ units of good $i$ for a price of $p^*_i$ per unit. 

Using the same sequence of ideas as in the proof of Lemma~\ref{lem_structural}, we can bound the expected surplus as,

\begin{align*}
E_{\vec{v} \sim \mathcal{F}}[\sum_{j =1}^N u_j(\vec{v})] & \geq \sum_{i \in \mathcal{I}}Pr(\text{i is not sold out}) (E_{\vec{v}}[V_i(\vec{v})] - p^*_iE_{\vec{v}}[k_i(\vec{v})])\\
& = \sum_{i \in \mathcal{I}}Pr(\text{i is not sold out}) (V_i - p^*_ik^*_i).
\end{align*}

Moving on to the profit, for any fixed valuation $\vec{v}$, let $S(\vec{v})$ and $U(\vec{v})$ be the set of items that are sold out (all $k^*_i$ copies), and the ones that are not respectively. Suppose that for any $\vec{v}$, $t_i(\vec{v})$ is the number of copies of good $i$ sold by the mechanism. Then, the expected profit is at least 

\begin{align*}
&= E_{\vec{v}} [\sum_{i \in S(\vec{v})}(k^*_ip^*_i - \sum_{n=1}^{k^*_i}c_i(n)) + \sum_{i \in U(\vec{v})}t_i(\vec{v})p^*_i - \sum_{n=1}^{t_i(\vec{v})}c_i(n))] \\
& \geq \sum_{i \in \mathcal{I}}\{Pr(\text{i is sold out})(p^*_ik^*_i - C_i(k^*_i)) - Pr(\text{i is not sold out})C_i(k^*_i)\}.
\end{align*}

Adding the profit and surplus gives a lower bound on the expected welfare due to our mechanism. Also recall that due to our choice of $p^*i, k^*_i$, the following is true $V_i - p^*_ik^*_i - C_i(k^*_i) = p^*_ik^*_i - C_i(k^*_i).$ So, we get that the expected social welfare of our mechanism $\mathcal{M}$ is at least,

\begin{align*}
E[SW(\mathcal{M}] & \geq \sum_{i \in \mathcal{I}}Pr(\text{i is not sold out}) (V_i - p^*_ik^*_i - C_i(k^*_i)) + \sum_{i \in \mathcal{I}}Pr(\text{i is sold out})(p^*_ik^*_i - C_i(k^*_i))\\
& = \sum_{i \in \mathcal{I}}Pr(\text{i is not sold out}) (p^*_ik^*_i - C_i(k^*_i)) + \sum_{i \in \mathcal{I}}Pr(\text{i is sold out})(p^*_ik^*_i - C_i(k^*_i)) \\
& = \sum_{i \in \mathcal{I}}(p^*_ik^*_i - C_i(k^*_i)).
\end{align*}

How does this compare to the $E_{\vec{v}}[SW(Alg)] = \sum_{i \in \mathcal{I}}(V_i - E_{\vec{v}}[C_i(k_i(\vec{v}))]$? Observe that by Jensen's inequality (Lemma~\ref{app_lemma_convexexpectation}), we know that $E_{\vec{v}}[C_i(k_i(\vec{v}))] \geq C_i(k^*_i)$ and therefore, $E_{\vec{v}}[SW(Alg)] \leq \sum_{i \in \mathcal{I}}(V_i - C_i(k^*_i))$. The ratio of welfare of $Alg$ with respect to that of our mechanism is therefore at most,

$$\frac{\sum_{i \in \mathcal{I}}(V_i - C_i(k^*_i))}{\sum_{i \in \mathcal{I}}(k^*_ip^*_i - C_i(k^*_i))} = \frac{\sum_{i \in \mathcal{I}}(V_i - C_i(k^*_i))}{\sum_{i \in \mathcal{I}}(\frac{V_i}{2} - C_i(k^*_i))}.$$

The second expression comes from the fact that $p^*_i = \frac{1}{2k^*_i}.$

Since the solution provided by the algorithm is $\alpha$-bounded, we know that $\sum_{i}V_i \geq \alpha \sum_i E_{\vec{v}}[C_i(k_i(\vec{v}))] \geq \alpha \sum_i C_i(k^*_i)$ for some $\alpha \in [2,\infty]$. Therefore, the ratio of the expected welfare is at most $\frac{\sum_{i \in \mathcal{I}}(\alpha C_i(k^*_i) - C_i(k^*_i))}{\sum_{i \in \mathcal{I}}(\frac{\alpha}{2}C_i(k^*_i) - C_i(k^*_i))}$.

After some elementary algebra, we get that 

$$\frac{E_{\vec{v}}[SW(Alg)]}{E_{\vec{v}}[SW(\mathcal{M}] }\leq 2\frac{\alpha-1}{\alpha-2}.$$

The second part of the theorem follows directly from the algorithm in Claim~\ref{clm_allocationalg}.


\end{proof}


\section{Generalizations}
\label{sec:generalizations}
In previous section, we presented a $O(1)$-approximate on the fly mechanism and a $O(\frac{\alpha-1}{\alpha-2})$-approximate commitment mechanism for XoS buyers. Restricting our focus to OTF mechanisms, we now generalize Theorem~\ref{thm:otfbayesin} in two directions. First, we consider markets where buyer valuations are subadditive (strictly more general than fractionally subadditive) and present mechanisms that extract a logarithmic fraction of the optimum welfare for the restricted case of limited supply. Following this, we move on to the more daunting problem of designing good mechanisms when the seller is completely oblivious to the buyer valuations. Leveraging our black-box existence result of Theorem~\ref{thm:otfbayesin}, we use `guess prices' to design mechanisms with reasonable performance for XoS buyers and convex production costs. 

\subsection{Subadditive Buyers and Limited Supply: Single Price Mechanisms}
\label{sec:subadditive}
We now move on to setttings where buyers have subadditive valuation functions, which are a strict generalization of fractionally subadditive functions as well as the largest class of complement-free valuations. Our main result in this section is a posted price mechanism\footnote{Recall that there is no difference between on the fly and commitment mechanisms when we have limited supply.} for instances where the seller has a limited supply  of each item; this mechanism posts a single price per good and achieves a $O(\log(M))$-approximation to the optimum social welfare. 

In~\cite{feldmanGL15}, the authors remarked that their mechanism could be used to obtain a logarithmic approximation to welfare for unit supply settings using the fact that every subadditive function can be approximated by an XoS function with a logarithmic distortion (see~\cite{bhawalkarR11} for a precise definition). However, it is unclear whether their techniques imply a computationally efficient mechanism to this effect. We show how to efficiently compute posted prices that achieve the desired approximation guarantee even when the seller has an arbitrary supply of various items. All of the proofs from this section are present in the Appendix owing to their similarities to our previous proofs.

%

\begin{theorem}
\label{thm_subadditive}
\begin{enumerate}
\item (Black-box reduction) For any instance with subadditive buyer valuations and arbitrary supply, given an allocation algorithm $Alg$, we can compute a price $p_i$ for every good $i$ such that an on the fly mechanism with price vector $\vec{p} = (p_1, \ldots, p_m)$ and the original supply constraints yields a welfare of $\frac{E_{\vec{v} \sim \mathcal{F}}[SW(Alg(\vec{v})]}{O(\log M)}$. 

\item (Computational Result) For any instance with subadditive buyer valuations and arbitrary supply, we can compute posted prices in poly-time such that the resulting mechanism is an $O(\log(M))$-approximation to the optimum social welfare.

\end{enumerate}

\end{theorem}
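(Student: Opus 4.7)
The plan is to reduce the subadditive case to the XoS case by invoking the well-known result that every monotone subadditive function admits an XoS approximation with $O(\log M)$ distortion (Bhawalkar--Roughgarden / Dobzinski): for every buyer $j$ and every set $T$, there exists an additive function $\tilde{a}_{j,T}$ such that $\tilde{a}_{j,T}(T) \geq v_j(T)/O(\log M)$ and $\tilde{a}_{j,T}(S) \leq v_j(S)$ for all $S \subseteq T$. I would treat these additive approximations as surrogate XoS clauses and pipe them into the machinery already built in Theorem~\ref{thm:otfbayesin} and Lemma~\ref{lem_structural}, absorbing the distortion as a loss in the final approximation ratio.

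For Part~1 (the black-box reduction), I would first run $Alg$ on $\vec{v}$ to obtain $A_j(\vec{v})$ for each buyer, and then, for each buyer~$j$, let $\tilde{o}_j$ be the $O(\log M)$-approximating additive clause of $v_j$ with respect to the set $A_j(\vec v)$. Define surrogate quantities $\tilde{V}_i(\vec v) = \sum_{j \in N_i(\vec v)} \tilde{o}_j(i)$ and $\tilde V_i = E_{\vec v}[\tilde V_i(\vec v)]$. Now apply the pricing formula of Theorem~\ref{thm:otfbayesin} verbatim, with $V_i$ replaced by $\tilde V_i$: set
\[
k^*_i = E_{\vec v}[k_i(\vec v)], \qquad p^*_i = \frac{\tilde V_i + E_{\vec v}[C_i(k_i(\vec v))]}{2 k^*_i}.
\]
The three conditions of Lemma~\ref{lem_structural} need to be rechecked with $V_i$ replaced by $\tilde V_i$; the first and third are immediate by the choice of $p^*_i$ and Jensen's inequality exactly as before, while the second holds with $SW(i)$ replaced by the surrogate $\tilde{SW}(i) := \tilde V_i - E_{\vec v}[C_i(k_i(\vec v))]$. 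The conclusion of Lemma~\ref{lem_structural} then gives expected welfare at least $\tfrac12 \sum_i \tilde{SW}(i)$.

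The key step that I expect to be the main obstacle is the surplus lower bound, because the buyers' realized utility is measured against $v_j$ but our prices are calibrated against the surrogate $\tilde o_j$. The trick is that in the Lemma~\ref{lem_structural} argument, buyer $j$'s utility under a hypothetical purchase of $A_j(\vec a) \setminus Sold_j(\vec v)$ is $v_j(A_j(\vec a) \setminus Sold_j(\vec v)) - \sum p^*_i \geq \tilde o_j(A_j(\vec a) \setminus Sold_j(\vec v)) - \sum p^*_i$, since $\tilde o_j$ underestimates $v_j$ on all subsets of $A_j(\vec a)$. This lets the whole calculation go through with $\tilde o_j$ in place of $o_j$, at the cost that the surrogate social welfare is only a $1/O(\log M)$ fraction of the true $E[SW(Alg(\vec v))]$. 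Combining gives the claimed $O(\log M)$-approximate OTF mechanism.

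For Part~2 (the computational result), I would combine Part~1 with an algorithmic construction of the XoS surrogates: using demand oracles, one can compute, in poly-time, XoS functions $\tilde v_j$ with $\tilde v_j(S) \leq v_j(S) \leq O(\log M)\,\tilde v_j(S)$ for every $S$ (this is the constructive direction of the Bhawalkar--Roughgarden reduction). Plug these XoS surrogates into the 2-approximation algorithm of Claim~\ref{clm_allocationalg} to obtain, in poly-time, an allocation whose welfare (under $\tilde v$) is within factor $2$ of the optimum under $\tilde v$, which in turn is within factor $O(\log M)$ of the optimum under $v$. Feeding this algorithm into the black-box reduction from Part~1 yields an overall $O(\log M)$-approximation in polynomial time. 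The main care here is to verify that the XoS clauses used by the black-box in Part~1 are computable from the same oracles (which they are, since the surrogate $\tilde v_j$ is already XoS by construction, so its maximizing clause at any set is available via its XoS oracle).
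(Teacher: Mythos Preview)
Your plan for Part~1 is essentially the paper's own approach. The paper makes the same observation --- replace XoS clauses by $O(\log M)$-approximate additive supporting functions of the subadditive $v_j$ at $A_j(\vec v)$, then rerun the Lemma~\ref{lem_structural} surplus/profit argument --- and proves a concrete constructive lemma (its Lemma labelled \texttt{lem\_polyt}) that, using only demand queries to $v_j$, produces a sub-allocation $B_j\subseteq A_j$ and a \emph{uniform} price $\hat p^j$ with $v_j(B_j\setminus T)\ge \hat p^j|B_j\setminus T|$ for all $T$ and $\hat p^j|B_j|\ge v_j(A_j)/O(\log M)$. That lemma is exactly the ``$\tilde o_j$'' you assume; your Part~1 is correct once you invoke it (you glossed over how $\tilde o_j$ is actually computed, and this is where the work is).

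Part~2, however, has a real gap. You assert that with demand oracles for $v_j$ one can \emph{compute} an XoS surrogate $\tilde v_j$ and then run the algorithm of Claim~\ref{clm_allocationalg} on $\tilde v$. But that algorithm requires demand and XoS oracles for its input valuations, and the known $O(\log M)$ XoS approximations of a subadditive function have exponentially many clauses; a demand oracle for $v_j$ does not give you a demand oracle for $\tilde v_j$, nor an XoS oracle for it. So ``plug $\tilde v$ into Claim~\ref{clm_allocationalg}'' is not a poly-time step as written. The paper sidesteps this entirely: for Part~2 it takes $Alg$ to be Feige's $2$-approximation for subadditive welfare (which runs directly on demand queries to $v_j$), and then applies the Part~1 black-box (again using only demand queries to $v_j$, via \texttt{lem\_polyt}). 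No surrogate valuation is ever materialized or queried. If you want to salvage your route, you would need to exhibit poly-time oracle simulations for $\tilde v_j$ from oracles for $v_j$, which is not known; otherwise, switch to Feige's algorithm as the paper does.
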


The computational result is achieved by plugging in the $2$-approximation algorithm for allocating goods to subadditive buyers by Feige~\cite{feige09}. Notice that while our previous results for XoS buyers hold for arbitrary convex cost functions, the mechanism for subadditive buyers achieves a logarithmic approximation only for the special case with limited supply. In fact, we believe that the black-box reduction extends naturally to settings with production. However, the bottleneck that prevents us from presenting an actual mechanism is the absence of a constant factor approximation algorithm for the allocation problem for subadditive buyers with production cost functions. An interesting open question is whether Feige's celebrated result can be extended to settings with convex costs.

%
%
%
%

\subsection{Unknown buyer distributions: $O(\log N \log M)$-approximate OTF Mechanisms}
\label{sec:unknownbuyers}
All of the mechanisms that we have designed so far strongly exploit the Bayesian environment and the prior distributions to derive item prices. In an attempt to wean our mechanisms off the strong dependence on the valuation distributions, we consider the problem of computing prices for on the fly mechanisms when the seller has absolutely no information regarding buyer valuations. The non-discriminatory nature of our mechanisms prevent
the use of standard learning techniques to approximate the buyer distributions followed by an application of Theorem~\ref{thm:otfbayesin}. Instead, we 
present a simple randomized mechanism that provides a logarithmic approximation to optimum welfare.

\begin{theorem}
\label{thm_unknown}
For any setting with convex production costs where buyer valuations belong to the class of XoS functions, we can design a computationally efficient on the fly mechanism that results in a $O(\log(N) \log(M))$-approximation to the optimum social welfare.
\end{theorem}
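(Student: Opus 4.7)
The plan is to combine the existence guarantee of Theorem~\ref{thm:otfbayesin} with a classical geometric-guessing strategy applied independently to each item. By Theorem~\ref{thm:otfbayesin}, a constant-factor approximation to $\mathrm{OPT}$ is achievable by an OTF mechanism with some (unknown) price--supply pair $(p_i^*, k_i^*)$ for every good $i$; since the seller has no distributional information, the task reduces to guessing each $(p_i^*, k_i^*)$ within a constant factor, and absorbing the probability of a successful guess into the approximation factor via an averaging argument.

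First I would standardise the range of prices. A straightforward truncation shows that items whose maximum single-item value across all buyers is below a $1/\mathrm{poly}(N,M)$ fraction of the largest such value contribute negligibly to $\mathrm{OPT}$ and can be ignored, so that $p_i^*$ lies in an interval $[p_{\min},p_{\max}]$ with $\log(p_{\max}/p_{\min}) = O(\log M)$ (using that XoS valuations are bounded by the sum of singleton values and that at most $M$ items exist); for $k_i^*$ the range is simply $\{1,2,\ldots,N\}$. For each good $i$ independently, the mechanism then draws $p_i$ uniformly from the geometric grid $\{p_{\min}, 2p_{\min},4p_{\min},\ldots,p_{\max}\}$ and $k_i$ uniformly from $\{1,2,4,\ldots,N\}$. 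With probability $\Omega(1/\log M)$ the sampled price satisfies $p_i \in [p_i^*, 2p_i^*]$, and independently with probability $\Omega(1/\log N)$ the sampled supply satisfies $k_i \in [k_i^*, 2k_i^*]$.

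Conditional on the joint event $E_i$ that both guesses are correct for good $i$, I would show that the three numerical conditions (1)--(3) of Lemma~\ref{lem_structural} continue to hold for good $i$ up to constant factors (using convexity of $C_i$ and the profit--surplus equivalence at $p_i^*$ from Lemma~\ref{lem_revsocwel}), so that the per-good contribution of the mechanism to the expected welfare is still $\Omega(\mathrm{SW}(i))$. Unconditionally, good $i$ contributes at least $\Omega(\mathrm{SW}(i)/(\log M \log N))$, and summing over $i$ via linearity of expectation yields the $O(\log N \log M)$-approximation.

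The main obstacle will be that in Lemma~\ref{lem_structural} the posted prices are deterministic, whereas here they are random and correlated across buyers through joint bundle selection. I would handle this by observing that the phantom-allocation argument inside Lemma~\ref{lem_structural} already decomposes item-by-item: the buyer-surplus lower bound is written as a sum over $i$ of terms involving only $p_i$ and the sold-out status of $i$, and analogously for the profit. Inserting an additional indicator $\mathbf{1}(E_i)$ into each summand and taking expectations over the independently chosen $(p_i,k_i)$ therefore only multiplies the per-item bound by $\Pr[E_i] = \Omega(1/(\log M \log N))$, without any troublesome cross-item dependence. A secondary technical point is that the sold-out event and the set of buyers who acquire $i$ now depend on the random prices too; this is addressed by fixing the realisation of $(p_j,k_j)_{j\neq i}$ and the valuation profile, and reusing the deterministic Lemma~\ref{lem_structural} bounds inside that conditioning before taking the outer expectation.
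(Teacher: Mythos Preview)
Your high-level plan---posit the benchmark $(p_i^*,k_i^*)$ from Theorem~\ref{thm:otfbayesin}, guess each pair geometrically, and average---matches the paper's. But the averaging step has a real gap. When you write ``unconditionally, good $i$ contributes at least $\Omega(\mathrm{SW}(i)/(\log M\log N))$,'' you are implicitly assuming that the contribution of good $i$ under a \emph{wrong} guess is non-negative. In an OTF mechanism the seller pays production cost for every copy sold; if your sampled $p_i$ lands below the average cost $C_i(k_i)/k_i$ (which nothing in your grid prevents, since the grid $\{p_{\min},2p_{\min},\ldots\}$ is built from valuation data and ignores $C_i$ and the sampled $k_i$), every sale of good $i$ loses money and the bad-guess cases can wipe out the good-guess gains. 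Inserting the indicator $\mathbf{1}(E_i)$ into the per-item summand is only a lower bound if the discarded term is non-negative, and you have not established that. The paper's fix is to randomise over the \emph{profit margin} rather than the raw price: it sets $\tilde{p}_i = C_i(\tilde{k}_i)/\tilde{k}_i + \frac{SW(OPT)}{4M\tilde{k}_i}\,2^{r_2}$, which forces $\tilde{p}_i\ge C_i(\tilde{k}_i)/\tilde{k}_i$ for every realisation and hence non-negative profit per good regardless of the guess (Lemma~\ref{applem_unknownprofit}).

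A second, related problem is the construction of the price grid itself. You bound $\log(p_{\max}/p_{\min})=O(\log M)$ by truncating items with small maximum singleton value, but in this setting the seller has no access to buyer valuations whatsoever; the paper assumes only an estimate of $SW(OPT)$. Moreover $p_i^*$ depends on $C_i$ as well as on values (see the closed-form in the proof of Theorem~\ref{thm:otfbayesin}), so a purely value-based grid need not contain $p_i^*$ even up to a constant. The paper instead observes that after discarding low-welfare goods the profit $\pi_i^*=p_i^*k_i^*-C_i(k_i^*)$ lies in $[SW(OPT)/(4M),\,SW(OPT)]$, an $O(\log M)$-length geometric range expressible purely in terms of $SW(OPT)$ and the known cost function $C_i$. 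This is precisely why randomising the margin above average cost, not the absolute price, is the right parameterisation here.
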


The above mechanism does not require access to any type of oracle. However, we assume that the seller is capable of efficiently estimating the social welfare of the optimum allocation (or an upper bound thereof). Actually, our results hold even when the seller can estimate only an upper bound on the optimum welfare $\hat{SW}(OPT)$ using standard techniques such as those in~\cite{chakrabortyHK09}.Moreover, unlike all of our previous results, there is no dependence on querying oracles and therefore, the runtime is strictly polynomial and there is no additive error in the welfare guarantee even when the (unknown) distributions are exponential in size.

\begin{proof}
Suppose that for a given instance, $OPT = (O_1, O_2, \ldots, O_N)$ is the welfare maximizing allocation. Let us treat the given instance as a full information problem where the seller is aware of all buyer valuations. Then, applying Theorem~\ref{thm:otfbayesin} for the given instance with $OPT$ as the solution returned by the benchmark algorithm, we obtain an on the fly mechanism with prices $(p_i)_{i \in \mathcal{I}}$ and supply constraints $(k_i)_{i \in \mathcal{I}}$ with a welfare guarantee of $\frac{SW(OPT)}{2}$. Define $\pi^*_i := p_ik_i - C_i(k_i)$ to be the profit that the mechanism makes from good $i$ when it is sold out, and $SW^*_i$ to be the total welfare due to good $i$ as we did in the proof of Theorem~\ref{thm:otfbayesin}. Without loss of generality, we make the following two assumptions regarding the structure of the optimum allocation and our OTF mechanism derived via $OPT$.

\begin{enumerate}
\item For all $i \in \mathcal{I}$, the number of units of good $i$ allocated in the optimum allocation is a power of two, and therefore, $k_i$ has the form $2^{l}$ for some $l \in [0,\log(N)]$. If this is not true, one can easily replace $OPT$ with the welfare maximizing solution where the quantity of each good allocated is a power of two (or no good is allocated at all), resulting in at most a factor two loss in social welfare.

\item Similarly, we also assume that for every good $i$, the total profit obtained by the mechanism when this good is sold out is at least $\frac{SW(OPT)}{4M}$. One can always discard goods whose welfare contribution is extremely low. For instance, in the benchmark OTF mechanism, consider all the goods $i$ satisfying $p_ik_i - C_i(k_i) < \frac{SW(OPT)}{4M}$. We know from the proof of Theorem~\ref{thm:otfbayesin} that $\sum_{i \in \mathcal{I}}p_ik_i - C_i(k_i) = \frac{SW(OPT)}{2}$. Since there are at most $M$ goods, discarding the ones with welfare smaller than $\frac{SW(OPT)}{4M}$, reduces the mechanism's social welfare by at a most a factor two.

\end{enumerate}
%

Of course, the OTF mechanism derived from $OPT$ is simply a theoretical benchmark that we will compare our actual randomized mechanism to. Since, we are not aware of the prices or the supply constraints, we will attempt to design a randomized mechanism that selects these parameters from a feasible set and show that with a probability $\frac{1}{\Omega(\log M \log N)}$, the randomized mechanism correctly guesses these parameters for each good. Then, in this case, the welfare achieved per good is close to that in the benchmark OTF mechanism. When the guessed price is incorrect, we sill prove that the social welfare due to each good $i$ is non-negative and therefore, this does not adversely effect the welfare of the mechanism.

We now formally define our mechanism. 
\begin{enumerate}
\item For each good $i \in \mathcal{I}$, select an integer $r_1$ uniformly at random from the set $[0,1,2,\ldots \log(N)]$ and set $\tilde{k}_i = 2^{r_1}$.

\item Once $\tilde{k}_i$ is selected, choose an integer $r_2$ uniformly at random from the set $[0,1,2,\ldots, 2+\log(M)]$, and set the price $\tilde{p}_i = \frac{C_i(\tilde{k}_i)}{\tilde{k}_i} + \frac{SW(OPT)}{4M\tilde{k}_i}2^{r_2}$.
\end{enumerate}

Consider the OTF mechanism with price $\tilde{p}_i$ on good $i$ and supply constraint $\tilde{k}_i$. Given a vector of prices and supply parameters, $\vec{p'}, \vec{k'}$, define $\pi_i(\vec{p'}, \vec{k'})$ to be the profit that our actual OTF mechanism obtains for good $i$ given a particularly instantiation of prices and supply, and $SW(\vec{p'}, \vec{k'})$ to be the total social welfare for the same parameters. Our first lemma establishes that for every possible instantiation of prices and supply, the profit due to any given good is non-negative.

\begin{lemma}
\label{applem_unknownprofit}
For all $i \in \mathcal{I}$, and for every possible supply vector $\vec{k'}$ and price vector $\vec{p'}$ for the mechanism, $\pi_i(\vec{p'}, \vec{k'}) \geq 0$.
\end{lemma}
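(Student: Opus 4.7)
The plan is to observe that the mechanism's price $\tilde{p}_i$ is, by construction, at least the average production cost at the full guessed supply $\tilde{k}_i$, and then to use convexity of $C_i$ to extend this comparison to any number of actual sales $t_i \le \tilde{k}_i$. Concretely, I would first note that since $r_2 \ge 0$, the ``bonus'' term $\frac{SW(OPT)}{4M\tilde{k}_i}2^{r_2}$ is nonnegative, so
\[
\tilde{p}_i \;\ge\; \frac{C_i(\tilde{k}_i)}{\tilde{k}_i}.
\]

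Next I would invoke the standing assumption that each $c_i$ is nondecreasing (equivalently, $C_i$ is convex with $C_i(0)=0$), which implies that the average cost $t \mapsto C_i(t)/t$ is nondecreasing on the positive integers. Consequently, for any $1 \le t \le \tilde{k}_i$ we have $\frac{C_i(t)}{t} \le \frac{C_i(\tilde{k}_i)}{\tilde{k}_i} \le \tilde{p}_i$. This is exactly the same monotonicity of average cost that was already used in the proof of Lemma~\ref{lem_structural} to handle unsold items.

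Finally, I would conclude by specializing to the actual realized number of sales. Let $t_i \le \tilde{k}_i$ denote the number of copies of good $i$ that are ultimately sold under the OTF mechanism (with the supply cap $\tilde{k}_i$ ensuring $t_i \le \tilde{k}_i$ regardless of the arrival order or the other coordinates of $\vec{p'}, \vec{k'}$). If $t_i = 0$, then $\pi_i(\vec{p'},\vec{k'}) = 0$ trivially (no production happens in an OTF mechanism when nothing is sold). Otherwise,
\[
\pi_i(\vec{p'}, \vec{k'}) \;=\; t_i\,\tilde{p}_i - C_i(t_i) \;\ge\; t_i \cdot \frac{C_i(t_i)}{t_i} - C_i(t_i) \;=\; 0,
\]
which is the desired inequality.

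There is no substantive obstacle: the entire statement rests on the identity $\tilde{p}_i \ge C_i(\tilde{k}_i)/\tilde{k}_i$ baked into the mechanism and on convexity of $C_i$. The only thing to be careful about is the dependence on $\vec{p'}, \vec{k'}$ rather than just $(\tilde{p}_i,\tilde{k}_i)$, but good $i$'s contribution to profit depends only on its own price and supply together with how many of its copies get sold, and the sales count is bounded by $\tilde{k}_i$ by the OTF supply cap; hence the bound holds uniformly over all realizations of the randomness on the other goods.
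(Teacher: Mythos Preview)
Your proposal is correct and follows essentially the same argument as the paper: from the definition of the price one has $p'_i \ge C_i(k'_i)/k'_i$, convexity of $C_i$ makes the average cost nondecreasing so $p'_i \ge C_i(t_i)/t_i$ for any $t_i \le k'_i$, and hence $t_i p'_i - C_i(t_i) \ge 0$. Your treatment is slightly more explicit (handling $t_i=0$ and spelling out why the ``bonus'' term gives the initial inequality), but the route is identical.
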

\begin{proof}
Suppose that for a given choice of supply and price, $t_i$ units of good $i$ are sold.  Then, the profit due to this good is $p'_it_i - C_i(t_i)$. By definition, we know that (given $k'_i$), $p'_i \geq \frac{C_i(k'_i)}{k'_i}$ and that $t_i \leq k'_i$ since the mechanism cannot sell more than $k'_i$ units of this good. Therefore, by convexity, it is the case that $p'_i \geq \frac{C_i(t_i)}{t_i}$. Substituting this, we get $p'_it_i - C_i(t_i) \geq 0$. 
\end{proof}

Our next claim establishes the existence of an approximately `correct guess price'. Specifically, for each good $i$, we show that when $\tilde{k}_i = k_i$, i.e., the randomized mechanism correctly guesses the supply for this good (w.r.t the benchmark mechanism), $\exists$ some price $\tilde{p}_i \leq p_i$, at which $\tilde{p}_ik_i - C_i(k_i) \geq \frac{\pi^*_i}{2}$.

\begin{lemma}
\label{applem_correctguessprice}
There exists $r_2 \in [0,2+\log(M)]$ such that for $\tilde{p}_i = \frac{C_i(k_i)}{k_i} + \frac{SW(OPT)}{4Mk_i}2^{r_2}$, 
$$\tilde{p}_ik_i - C_i(k_i) \geq \frac{\pi^*_i}{2} \geq \frac{SW^*_i}{4}$$
\end{lemma}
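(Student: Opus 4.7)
The plan is to reduce the target inequality to a single condition on $2^{r_2}$ and then produce a non-negative integer $r_2$ inside the window $[0, 2 + \log M]$ that satisfies it; the rest is a routine calculation leveraging the profit-surplus identity of Lemma~\ref{lem_revsocwel}.

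First I would substitute the definition of $\tilde{p}_i$ into the quantity $\tilde{p}_i k_i - C_i(k_i)$. The $C_i(k_i)$ term cancels and the expression collapses to
\[
\tilde{p}_i k_i - C_i(k_i) \;=\; \frac{SW(OPT)}{4M}\,2^{r_2}.
\]
Hence the target inequality $\tilde{p}_i k_i - C_i(k_i) \geq \pi^*_i/2$ is equivalent to the clean requirement $2^{r_2} \geq 2M\pi^*_i/SW(OPT)$, and it suffices to exhibit a non-negative integer $r_2 \leq 2 + \log M$ meeting this bound.

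Next I would upper-bound the right-hand ratio in order to certify that such an $r_2$ lives inside the prescribed window. Because the benchmark prices $p_i$ and supply $k_i$ arise from applying Theorem~\ref{thm:otfbayesin} to the deterministic full-information instance with $OPT$ as the chosen allocation, the profit-surplus equivalence of Lemma~\ref{lem_revsocwel} holds with equality: $\pi^*_i = SW^*_i/2$. Since the per-good welfare $SW^*_i$ is at most the total welfare $SW(OPT)$, this yields $\pi^*_i \leq SW(OPT)/2$ and therefore $2M\pi^*_i/SW(OPT) \leq M$. Choosing $r_2 = \max\{0,\,\lceil \log_2(2M\pi^*_i/SW(OPT)) \rceil\}$ then gives an integer in $[0, \lceil \log_2 M \rceil] \subseteq [0, 2 + \log M]$ that satisfies the needed lower bound; the small edge case where the ratio drops below $1$ is absorbed by $r_2 = 0$, since $2^0 = 1$.

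The second half of the chained inequality, $\pi^*_i/2 \geq SW^*_i/4$, is then immediate from the same identity $\pi^*_i = SW^*_i/2$ noted above. The main obstacle, such as it is, is purely bookkeeping: the slack $2 + \log M$ is deliberately chosen to absorb integer rounding in the logarithm, and the upper bound on $\pi^*_i$ relies on observing that Jensen's inequality---which only produced the one-sided bound $\pi^*_i \geq SW^*_i/2$ in the Bayesian proof of Theorem~\ref{thm:otfbayesin}---collapses to an equality in the deterministic setting used here to define the benchmark mechanism.
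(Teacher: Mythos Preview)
Your argument is correct and establishes the lemma as stated. The approach is close in spirit to the paper's but organized differently: the paper observes that $g(r_2):=\tilde p_i k_i - C_i(k_i)=\frac{SW(OPT)}{4M}2^{r_2}$ doubles with each increment of $r_2$, checks the endpoints $g(0)\le \pi^*_i$ (using the standing assumption $\pi^*_i\ge SW(OPT)/(4M)$) and $g(2+\log M)=SW(OPT)\ge \pi^*_i$, and concludes by a discrete intermediate-value step that some $r_2$ satisfies $\tfrac{\pi^*_i}{2}\le g(r_2)\le \pi^*_i$. You instead solve directly for the smallest admissible $r_2$ and verify it lands in the window via $\pi^*_i\le SW(OPT)/2$.

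Two remarks worth noting. First, your route does not invoke the assumption $\pi^*_i\ge SW(OPT)/(4M)$ at all, which is a small simplification. Second, the paper's argument yields the extra upper bound $g(r_2)\le \pi^*_i$, and this is what is used immediately afterward to conclude $p^g_i\le p_i$ for the surplus analysis in Lemma~\ref{applem_swlb}. Your explicit choice $r_2=\max\{0,\lceil\log_2(2M\pi^*_i/SW(OPT))\rceil\}$ in fact also satisfies $g(r_2)\le \pi^*_i$ (since $2^{\lceil x\rceil}\le 2\cdot 2^x$, and the $r_2=0$ case is handled by the standing assumption), but you do not state this; if you intend your proof to slot into the rest of the argument, it would be worth recording that bound explicitly.
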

\begin{proof}
The proof is not particularly hard to see. First observe that when $r_2 = 0$, $\tilde{p}_ik_i - C_i(k_i) = \frac{SW(OPT)}{4M} \leq \pi^*_i$ by definition. Next, when $r_2 = 2+\log(M)$, $\tilde{p}_ik_i - C_i(k_i) = SW(OPT) \geq \pi^*_i$. 

Define $g(r_2) = \{\frac{C_i(k_i)}{k_i} + \frac{SW(OPT)}{4Mk_i}2^{r_2}\}k_i - C_i(k_i)$. For some $t \in [0,1+\log(M)]$, consider $g(t)$, and $g(t+1)$. Using some basic algebra, we get that $g(t+1) = 2g(t)$. That is, for successive values of $r_2$, $g(r_2)$ is doubled. We have already shown that $g(0) \leq \pi^*_i \leq g(2+\log(M))$. Therefore, $\exists$ some $r_2$ in the range $[0,2+\log(M)]$ such that $\frac{\pi^*_i}{2} \leq g(r_2) \leq \pi^*_i$. The connection to $SW^*_i$ comes from the fact that $\pi^*_i \geq \frac{SW^*_i}{2}$, which is evident in the proof of Theorem~\ref{thm:otfbayesin}.
\end{proof}

For each good $i$, we use $p^g_i$ to denote this so-called `correct guess price'. Additionally, since the function $p'_ik_i - C_i(k_i)$ is monotone increasing in $p'_i$, the correct guess price $p^g_i \leq p_i$, the price of good $i$ in the benchmark OTF mechanism.

For any given instantiation of prices and supply for our randomized mechanism $\vec{p'}, \vec{k'}$, define $CG(\vec{p'}, \vec{k'})$ to be the set of goods for which supply equals its value in the benchmark mechanism and the price is the `correct guess price'. i.e., $CG(\vec{p'}, \vec{k'}) := \{i ~ | ~ k'_i = k_i \text{ and } p'_i = p^g_i\}$. Now, we are ready to obtain a lower bound on $SW(\vec{p'}, \vec{k'})$ in terms of the goods in $CG(\vec{p'}, \vec{k'})$. 

\begin{lemma}
\label{applem_swlb}
For any given instantiation of our algorithm, i.e., for any choice of $\vec{p'}$ and $\vec{k'}$, the resulting social welfare is given by,

$$SW(\vec{p'}, \vec{k'}) \geq \sum_{i \in CG(\vec{p'}, \vec{k'})}p^g_ik_i - C_i(k_i) \geq \sum_{i \in CG(\vec{p'}, \vec{k'})}\frac{SW^*_i}{4}.$$
\end{lemma}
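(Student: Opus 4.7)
Plan. The second inequality is immediate: Lemma~\ref{applem_correctguessprice} gives $p^g_ik_i - C_i(k_i) \ge SW^*_i/4$ for each $i$, and summing over $i \in CG(\vec{p'},\vec{k'})$ produces the bound. So the content of the lemma lies in the first inequality.

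For the first inequality I would start from the standard identity
\[
SW(\vec{p'},\vec{k'}) \;=\; \sum_{j=1}^N u_j \;+\; \sum_{i \in \mathcal{I}} \bigl(p'_i t_i - C_i(t_i)\bigr),
\]
where $t_i$ denotes the number of copies of good $i$ that the randomised mechanism ends up selling and $u_j\ge 0$ is each buyer's realised utility. Lemma~\ref{applem_unknownprofit} ensures every per-good profit is non-negative, so dropping the profits of goods outside $CG$ and of goods in $CG$ with $t_i < k_i$ only weakens the inequality. Each $i \in CG$ with $t_i = k_i$ contributes profit equal to exactly $p^g_ik_i - C_i(k_i)$, since $p'_i = p^g_i$ there. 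It remains to recover the same per-good value from the surplus side for each $i \in CG$ with $t_i < k_i$.

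To do this I would imitate the counter-factual deviation used in the proof of Lemma~\ref{lem_structural}, specialised to the deterministic full-information setting in which the benchmark OTF mechanism was derived from $OPT$. Writing $o_j$ for the XoS clause maximising $o_j(A_j^{OPT})$ and $N_i^{OPT}$ for the set of buyers who receive good $i$ under $OPT$, I would define, for each $j$, the deviation bundle $T_j = \{i \in A_j^{OPT} : i \in CG(\vec{p'},\vec{k'}),\; t_i < k_i\}$. Because every item in $T_j$ has strictly positive remaining supply throughout the run, $T_j$ is purchasable when $j$ arrives; the fact that $j$ selects a utility-maximising bundle at prices $\vec{p'}$ together with the XoS lower bound $v_j(T_j) \ge \sum_{i\in T_j}o_j(i)$ then give $u_j \ge \sum_{i\in T_j}(o_j(i) - p^g_i)$. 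Regrouping the double sum good-by-good produces
\[
\sum_{j=1}^N u_j \;\ge\; \sum_{\substack{i \in CG \\ t_i < k_i}}\bigl(V_i^* - k_i p^g_i\bigr).
\]

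To close the loop, I would use the identity $p_i = (V_i^* + C_i(k_i))/(2k_i)$ from the derivation of the benchmark prices in the proof of Theorem~\ref{thm:otfbayesin}, which rearranges to $V_i^* - k_ip_i = p_ik_i - C_i(k_i)$. Since $p^g_i \le p_i$ by the construction in Lemma~\ref{applem_correctguessprice}, a one-line manipulation yields $V_i^* - k_ip^g_i \ge p^g_ik_i - C_i(k_i)$, so the surplus contribution of an unsold good in $CG$ matches the profit contribution of a sold-out good in $CG$. Adding the two cases produces $SW(\vec{p'},\vec{k'}) \ge \sum_{i\in CG}(p^g_ik_i - C_i(k_i))$, completing the first inequality. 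The main subtlety I expect is the bookkeeping in the counter-factual step: one must choose $T_j$ so that (a) every item in it is actually available when $j$ arrives, which is exactly why the restriction to goods with $t_i < k_i$ is needed, and (b) when the double sum is regrouped good-by-good only the non-negative quantities $V_i^* - k_ip^g_i$ appear, with no contamination from goods whose instantiated price $p'_i$ is not the correct guess price.
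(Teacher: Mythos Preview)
Your proposal is correct and follows essentially the same route as the paper: decompose social welfare into buyer surplus plus per-good profit, use Lemma~\ref{applem_unknownprofit} to discard non-$CG$ profits, cover sold-out $CG$ goods by their profit $p^g_ik_i - C_i(k_i)$, and cover unsold $CG$ goods by the counter-factual deviation to $O_j$ restricted to those goods. The only cosmetic difference is in the last step: the paper bounds the surplus term by $V^*_i - p^g_ik_i \ge \tfrac{1}{2}SW^*_i$ (via $p^g_i \le p_i$ and condition~(2) of Lemma~\ref{lem_structural}) and then uses $p^g_ik_i - C_i(k_i) \le \pi^*_i = \tfrac{1}{2}SW^*_i$, whereas you go directly through the profit--surplus identity $V^*_i - k_ip_i = p_ik_i - C_i(k_i)$ and monotonicity in $p^g_i \le p_i$; both arrive at $V^*_i - k_ip^g_i \ge p^g_ik_i - C_i(k_i)$.
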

\begin{proof}
Suppose that for the given choice of prices and supply, $S$ and $U$ denote the sold out and unsold goods respectively upon running the mechanism. Since we already know that the profit due to the unsold goods is non-negative (Lemma~\ref{applem_unknownprofit}), the overall profit due to the mechanism is at least the profit due to the `sold out' goods $\sum_{i \in S}p'_ik'_i - C_i(k'_i)$, which in turn is at least $\sum_{i \in S \cap CG(\vec{p'}, \vec{k'})}p^g_i k_i - C_i(k_i)$ as per the definition of $CG(\vec{p'}, \vec{k'})$. Applying Lemma~\ref{applem_correctguessprice}, we get the following final lower bound on the profit,

$$\sum_{i \in S \cap CG(\vec{p'}, \vec{k'})}\frac{1}{2}\pi^*_i \geq \sum_{i \in S \cap CG(\vec{p'}, \vec{k'}))}\frac{1}{4}SW^*_i.$$

Under the given set of prices, we know that every buyer's utility (surplus) is non-negative. So, we can bound the overall surplus (buyer utility) due to the mechanism by counting only the utility that buyers receive from the goods in $U \cap CG(\vec{p'}, \vec{k'})$. Since these goods are not sold out, they are clearly available to be purchased by each buyer. Moreover, for each buyer $i$, it suffices to consider the utility that this buyer would have received from the subset of goods in $O_i$ that are available in $U \cap CG(\vec{p'}, \vec{k'})$. Suppose that for each $i \in \mathcal{I}$, $V^*_i$ is the total value from good $i$ in the optimum solution (similar to our definitions in the proof of Theorem~\ref{thm:otfbayesin}).

\begin{align*}
\text{Surplus} & \geq \sum_{j=1}^N v_j(O_j \cap (U \cap CG(\vec{p'}, \vec{k'}))) - \sum_{i \in O_j \cap (U \cap CG(\vec{p'}, \vec{k'}))}p'_i\\
& \geq \sum_{i \in U \cap CG(\vec{p'}, \vec{k'}))}[V^*_i - p^g_ik_i]. \\
& \geq \sum_{i \in U \cap CG(\vec{p'}, \vec{k'}))}\frac{1}{2}SW^*_i.
\end{align*}

The last inequality comes from Lemma~\ref{lem_structural} and the fact that $p^g_i \leq p_i$. Adding the profit and surplus yields the lemma.

\end{proof}

We can now bound the overall social welfare using just the social welfare due to each good given that the guess price is correct, since we know that the social welfare is non-negative otherwise. Formally, using Lemma~\ref{applem_swlb}

\begin{align*}
E[SW] = &  \sum_{(\vec{p'}, \vec{k'})}Pr(\vec{p'}, \vec{k'}) SW(\vec{p'}, \vec{k'})\\
\geq &  \sum_{(\vec{p'}, \vec{k'})}Pr(\vec{p'}, \vec{k'}) \sum_{i \in CG(\vec{p'}, \vec{k'})}\frac{SW^*_i}{4} \\
= & \sum_{i \in \mathcal{I}}Pr(p'_i =p^g_i \text{ and } k'_i = k_i)\frac{SW^*_i}{4} \\
= & \sum_{i \in \mathcal{I}}\frac{1}{(2+\log M)(1+\log N)}\frac{SW^*_i}{4} \\
= & \frac{1}{4(2+\log M)(1+\log N)}SW(OPT)
\end{align*}

This completes the proof.%
\end{proof}

\noindent\textbf{Acknowledgements} The author wishes to express his gratitude to Elliot Anshelevich and Brendan Lucier for taking the time to provide useful feedback on the manuscript.
\bibliography{bibliography}
\bibliographystyle{plain}

\newpage

\appendix

\section{Additional Lemmas}

\begin{lemma}
	\label{app_lemma_convexexpectation}Let $C$ be a convex function and let $x$ be a random variable distributed according to $F$. Then, $E_{x \sim F}[C(x)] \geq C(E_{x \sim F}[x])$.
\end{lemma}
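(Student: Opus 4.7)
The statement is the classical Jensen inequality, so the plan is to apply the standard supporting hyperplane argument for convex functions. The key fact I will use is that a convex function $C$ on (a convex subset of) $\mathbb{R}$ admits at every interior point $y$ a subgradient $g(y) \in \mathbb{R}$ such that $C(x) \geq C(y) + g(y)(x - y)$ for all $x$ in the domain. This is standard; on the interior of the domain, one may take $g(y)$ to be any element of $[C'_{-}(y), C'_{+}(y)]$, which is nonempty because one-sided derivatives of a convex function exist everywhere in the interior.

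With this in hand, the argument is a one-liner. I would set $\mu = E_{x \sim F}[x]$, which lies in the convex hull of the support of $F$ (and I may assume without loss of generality lies in the interior; the boundary case is handled by a continuity/limiting argument or by noting that if $\mu$ is a boundary point, $F$ is a point mass there and the inequality is trivial). Pick any subgradient $g \equiv g(\mu)$ at $\mu$. Then for every $x$ in the support,
\begin{equation*}
C(x) \;\geq\; C(\mu) + g\,(x - \mu).
\end{equation*}
Taking expectations of both sides with respect to $F$,
\begin{equation*}
E_{x \sim F}[C(x)] \;\geq\; C(\mu) + g\,(E_{x \sim F}[x] - \mu) \;=\; C(\mu) \;=\; C(E_{x \sim F}[x]),
\end{equation*}
which is exactly the claim.

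I do not expect any real obstacle here. The only mild subtlety is justifying the existence of a subgradient, and in the paper's setting the convex cost functions $C_i$ are defined on $\{0,1,2,\ldots\}$ and we apply the inequality to integer-valued random variables; in that discrete regime, one can entirely avoid calculus and instead prove the statement by induction on the size of the support using the two-point definition of convexity, $C(\lambda x + (1-\lambda) y) \leq \lambda C(x) + (1 - \lambda) C(y)$, which directly gives the base case $n=2$, and the inductive step follows by conditioning on whether the random variable equals one specific atom or lies in the rest. Either route is short, so I would present the subgradient version for brevity and note the discrete alternative in a sentence if desired.
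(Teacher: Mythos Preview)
Your proof is correct. The paper's own proof simply cites the result as Jensen's inequality, stating only the midpoint-convexity property $C\bigl(\tfrac{x+y}{2}\bigr) \le \tfrac{1}{2}\bigl(C(x)+C(y)\bigr)$ and nothing more. Your subgradient (supporting-line) argument is the standard one-line proof that actually establishes the claim rather than merely naming it; it is more informative and self-contained than what the paper provides, though of course both amount to invoking Jensen.
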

\begin{proof}
This follows from the fundamental property of convex functions that $C(\frac{x+y}{2}) \leq \frac{1}{2}(C(x) + C(y))$, also referred to as Jensen's Inequality.
\end{proof}

\section{Proofs from Section~\ref{sec:framework}}
\begin{lem_app}{lem_posprofit}
For all $t \leq k$, the profit that the seller makes by selling $t$ copies of good $i$ at price $p^*_i$ is at least the cost of producing these $t$ copies, i.e., $t p^*_i \geq \sum_{n=1}^t c_i(n)$.
\end{lem_app}

\begin{proof}
From the previous lemma, we know that $\sum_{n=1}^k [p^*_i - c_i(n)]$ is exactly half the social welfare due to good $i$. Since we assumed that the overall welfare is non-negative, this means that $\sum_{n=1}^k [p^*_i - c_i(n)] \geq 0$. Now, remember that the cost function is convex, therefore, $p^*_i - c_i(n_1) \geq p^*_i - c_i(n_2)$ for $n_1 \leq n_2$. 

Suppose that by contradiction, the lemma statement is false. So, $\sum_{n=1}^t [p^*_i - c_i(n)] < 0$. By the convexity of the production cost function, it has to be true that $p^*_i - c_i(t) < 0$, and therefore for all $n > t$, $p^*_i - c_i(n) < 0$. Summing up over all $n > t$, we get $\sum_{n=t+1}^k[p^*_i - c_i(n)] < 0$. By our initial assumption, $\sum_{n=1}^t[p^*_i - c_i(n)] < 0$, giving us a total negative profit which contradicts Lemma~\ref{lem_revsocwel}. 
\end{proof}

\subsection{Combinatorial $2$-approximation algorithm for XoS Valuations}

We now present a simple, combinatorial $2$-approximation algorithm for the allocation problem with XoS buyers and production costs. This result extends the $2$-approximation algorithm described in~\cite{dobzinskiNS10} for unit-supply to more general settings with production costs. When used in conjunction with our central existence result (Theorem~\ref{thm:otfbayesin}), the current algorithm yields a set of non-discriminatory posted prices (per good) that achieve one fourth of the optimum social welfare. 

Formally, we are given a set of goods $\mathcal{I}$ with convex production cost function $C_j()$ for each good $j$ and the corresponding (non-decreasing) marginal cost function $c_j$. There are set of $N$ buyers $\{1,2,\ldots, N\}$ with each buyer having a fractionally subaddtive valuation function $v_i()$. Finally, we assume that the following algorithm has access to both a demand oracle and a XoS oracle. 

\subsection{Combinatorial Algorithm}
\begin{itemize}
\item Initialization
\begin{enumerate}
\item Let $S_0 = \mathcal{I}$ and $S_1 = S_2 = \ldots = S_N = \emptyset$ denote the initial allocation to buyers
\item For all $j \in \mathcal{I}$, set its initial price $p_j = c_j(1)$.
\item For all $j \in \mathcal{I}$, let $X_j = \{0\}$ denote the set of buyers to whom good $j$ has been allocated.
\item For all buyers $i$, goods $j$, let $q_i(j) = 0$ be the price assigned to item $j$ by buyer $i$.
\item For all goods $j$, let $y_j = 0$ denote the buyer offering the good at the cheapest price.
\end{enumerate}

\item Allocation: For each buyer $i \in 1,2, \ldots, N$
\begin{enumerate}

\item Let $S'_i$ be the max-utility bundle demanded by buyer $i$ at prices $\vec{p}$; set $S_i = S'_i$
\item Let $(x^i_j)_{j \in S'_i}$ denote the maximizing XoS clause for buyer $i$ and set $S'_i$. 
\item Updation: For all $j \in S'_i$,

\begin{itemize}
\item If $y_j \neq 0$, $S_{y_j} = S_{y_j} \setminus \{j\}$, $X_j = X_j \cup \{i\} \setminus \{y_j\}$.
\item $q_i(j) = x^i_j$ and $q_0(j) = c_j(|X_j|)$. 
\item Set $p_j = \min_{i \in X_j} q_i(j)$; set $x_j = \argmin_{i \in X_j} q_i(j)$.

\end{itemize}

\end{enumerate}
\end{itemize}

\begin{clm_app}{clm_allocationalg} ($2$-Approximation Algorithm) Given a valuation profile $\vec{v} = (v_1, v_2, \ldots, v_N)$ of XoS functions, and access to demand, and XoS oracles, we can compute in poly-time an allocation of goods to the buyers whose social welfare is at least half of that of the welfare maximizing allocation.

\end{clm_app}
\begin{proof}
\subsubsection*{Informal Description of Algorithm}
We now describe the various parts of the algorithms and provide intuition for its working. In addition to the $N$ buyers, we also add a buyer $0$ (dummy buyer), whose role will be akin to the seller selling the goods at their marginal production cost. Initially, suppose that buyer $0$ produces one copy of all the goods and prices them at $c_j(1)$ for all $j \in \mathcal{I}$. Now, in each round, exactly one buyer \textit{arrives} (this is indeed analogous to our sequential pricing mechanism), observes the prices of all the goods in $\mathcal{I}$ and purchases a bundle that maximizes her utility; the utility-maximizing bundle is computed using a demand oracle. This buyer (say $i$) prices her bundle ($S'_i$) using the additive clause that maximizes the XoS valuation for $S'_i$. One can imagine that in each round multiple copies of a good are being sold by different buyers at different prices. However, for obvious reasons, we are only concerned with the buyer selling this good at the minimum price $p_j$. Finally, at the end of each round, buyer $0$ replenishes the goods that have been purchased from him by producing a copy and increasing their price to the current marginal price. 

We observe that buyer $i$ starts out with the allocation $S'_i$, pricing each $j \in S'_i$ at $q_i(j)$. Some of these goods are sold off to future buyers and in every successive round, buyer $i$'s allocation is a subset of $S'_i$. Finally, buyer $i$ never changes the price of the goods she sells.

\subsection*{Analysis of Algorithm}

\textbf{Notation.} Naturally, we assume that the algorithm proceeds in rounds wherein buyer $i$ is allocated some set of goods in round $i$ (some of these goods may later be reallocated to later buyers). As mentioned in the algorithm, let $S'_i$ denote the set first allocated to buyer $i$ (in round $i$), $S^{(t)}_i \subseteq S'_i$ are the goods still present in the hands on buyer $i$ at the end of round $t$ and $S_i$ is the final allocation to this buyer, i.e., $S_i = S^{(N)}_i$. Next, suppose that $p^{(t)}_j$ denotes the price of good $j$ at the beginning of round $r$, and $q_i(j)$ is the value of good $j$ in the XoS clause of buyer $i$ corresponding to $S'_i$ (as mentioned in the algorithm). Finally, let $X_j$ and $X^*_j$ denote the set of buyers to whom good $j$ is allocated by our algorithm and in the optimum solution respectively, and $x^*_j = |X^*_j|$, and $x_j = |X_j|$. Similarly, let $X^{(r)}_j$ denote the set of buyers to whom item $j$ is assigned at the end of round $r$ (not including the zero buyer), and $x^{(r)}_j = |X^{(r)}_j|$.

We begin by stating a few simple observations regarding our algorithm.

\begin{claim}
\label{prop_combxosobservations}
\begin{enumerate}
\item The price of any good is non-decreasing during the course of the algorithm.

\item The set of goods allocated to each buyer is monotonically non-increasing.

\item For any good $j$, round $r$ and buyer $i \in X_j$, $p^{(r)}_j \leq q_i(j)$. 

\item For any good $j$, suppose that $x_j$ units of this good are sold in total by our algorithm. Then, for any $r \leq N$, we have that $p^{(r)}_j \leq c_j(x_j+1)$. 

\end{enumerate}
\end{claim}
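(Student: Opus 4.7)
The four statements are essentially bookkeeping invariants of Algorithm~\ref{alg_genprocedurebody}, and my plan is to prove them by induction on the round (buyer index) $r$, with property (1) doing most of the real work and the rest following quickly.

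For (1), the non-trivial task is to show that when buyer $i$ acquires $S'_i$, the updated minimum $\min_{i' \in X_j} q_{i'}(j)$ does not fall below the old $p^{(i)}_j$ for any $j \in S'_i$. The key observation will be a per-item lower bound on the XoS values of the new buyer: $q_i(j) = x^i_j \geq p^{(i)}_j$ for every $j \in S'_i$. I would prove this by contradiction: if $x^i_j < p^{(i)}_j$, then because the additive clause $(x^i_k)_{k \in S'_i}$ is a pointwise lower bound on $v_i$ restricted to any subset of $S'_i$, the bundle $S'_i \setminus \{j\}$ would yield utility at least $\sum_{k \in S'_i \setminus \{j\}}(x^i_k - p^{(i)}_k)$, which strictly exceeds the utility $\sum_{k \in S'_i}(x^i_k - p^{(i)}_k)$ of $S'_i$, contradicting the fact that $S'_i$ is a demanded bundle. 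Given this inequality, I analyse the update $X_j \gets X_j \cup \{i\} \setminus \{y_j\}$ case-wise: if $y_j \neq 0$ then the removed minimizer satisfied $q_{y_j}(j) = p^{(i)}_j$, every remaining $i' \in X_j$ satisfies $q_{i'}(j) \geq p^{(i)}_j$ by the inductive hypothesis, and $q_i(j) \geq p^{(i)}_j$ was just shown, so the new min is at least $p^{(i)}_j$; if $y_j = 0$ then $q_0(j)$ is replaced by $c_j(|X_j|)$ with $|X_j|$ weakly larger than before, which by monotonicity of $c_j$ does not decrease that particular $q$-value. Goods outside $S'_i$ are untouched, so the invariant propagates.

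Properties (2) and (3) follow immediately. For (2), the only line that modifies $S_{i'}$ for a previously arrived buyer $i' < i$ is $S_{y_j} \gets S_{y_j} \setminus \{j\}$, which can only remove elements; no line ever adds a good to an earlier buyer's set. Property (3) is essentially the definition of $p_j$: at the end of round $r$ we have $p^{(r)}_j = \min_{i' \in X^{(r)}_j} q_{i'}(j)$, so membership $i \in X^{(r)}_j$ forces $p^{(r)}_j \leq q_i(j)$.

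Property (4) follows by tracking the dummy buyer. Since $0$ always remains in $X_j$ with $q_0(j) = c_j(|X_j|)$, we get $p^{(r)}_j \leq c_j(|X^{(r)}_j|)$ at every round. The net effect of the update rule is that $|X_j|$ is non-decreasing in $r$ --- each round either swaps a seller out for the new buyer (size fixed) or adds the new buyer while $0$ stays (size grows) --- so $|X^{(r)}_j| \leq |X^{(N)}_j| = x_j + 1$, and monotonicity of $c_j$ yields $p^{(r)}_j \leq c_j(x_j+1)$. The main obstacle is the XoS/demand argument embedded in (1); once that pointwise inequality $q_i(j) \geq p^{(i)}_j$ is established, the remaining assertions are almost direct from the definitions in the algorithm.
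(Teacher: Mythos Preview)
Your proof is correct, and in fact the paper does not supply a proof of this claim at all: it is introduced merely as ``a few simple observations regarding our algorithm'' and then used without justification in Lemmas~\ref{lem_utility} and~\ref{lem_poi}. Your identification of the pointwise XoS inequality $q_i(j) = x^i_j \geq p^{(i)}_j$ as the one non-trivial ingredient is exactly right, and your demand-oracle contradiction argument for it is the standard and correct one. The remaining three statements are, as you say, bookkeeping that follows from the update rule once (1) is in hand; your tracking of the dummy buyer~$0$ for part~(4) matches the informal description of the algorithm given in the appendix.
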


For any set $T$ of goods, round $r$, we abuse notation and define $p^{(r)}(T) := \sum_{j \in T}p^{(r)}(j)$. Suppose that $(O_1, O_2, \ldots, O_N)$ denotes the optimum allocation to the buyers. 

\begin{lemma}
\label{lem_utility}
$$\sum_{i=1}^N [v_i(O_i) - p^{(i)}(O_i)] \leq \sum_{i=1}^{N}v_i(S_i) - \sum_{j \in \mathcal{I}}C_j(x_j).$$ 
\end{lemma}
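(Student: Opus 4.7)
The plan is to combine three ingredients: the demand-oracle optimality condition at each round, the XoS clause inequality, and a per-good money-flow accounting identity. Each ingredient is almost routine on its own; the real work is the accounting.

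First I would use the demand oracle. Buyer $i$ chose $S'_i$ to maximize utility at prices $p^{(i)}$, so $v_i(S'_i) - p^{(i)}(S'_i) \geq v_i(O_i) - p^{(i)}(O_i)$. Summing over $i$ reduces the lemma to showing
$$\sum_{i=1}^N \bigl[v_i(S'_i) - p^{(i)}(S'_i)\bigr] \leq \sum_{i=1}^N v_i(S_i) - \sum_{j \in \mathcal{I}} C_j(x_j).$$
Next, I would use the XoS clause $q_i$ corresponding to $S'_i$. By maximality, $v_i(S'_i) = \sum_{j \in S'_i} q_i(j)$, and since $q_i$ is an additive clause of $v_i$, we have $v_i(S_i) \geq \sum_{j \in S_i} q_i(j)$; using $S_i \subseteq S'_i$ from Claim~\ref{prop_combxosobservations} this gives
$$v_i(S'_i) - v_i(S_i) \leq \sum_{j \in S'_i \setminus S_i} q_i(j).$$

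The key step, which I expect to be the main obstacle, is a per-good accounting identity:
$$\sum_{i : j \in S'_i} p^{(i)}_j \;=\; C_j(x_j) + \sum_{i : j \in S'_i \setminus S_i} q_i(j).$$
To justify this I would follow the money for good $j$. Every buyer $i$ with $j \in S'_i$ pays $p^{(i)}_j$ to the current cheapest seller $y_j$. If $y_j = 0$ the payment equals the current marginal production cost, so the sum of all such payments across the $x_j$ production events telescopes to $\sum_{k=1}^{x_j} c_j(k) = C_j(x_j)$. If $y_j \neq 0$, the payment goes to a previous buyer $y$ at price exactly $q_y(j)$, because $y$'s asking price was set to $q_y(j)$ when $y$ first received $j$ (and is never changed) and $y$ was the argmin when $i$ arrived. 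Every buyer who ends up in $S'_i \setminus S_i$ for good $j$ (i.e., bought $j$ and later handed it off) is on the receiving end of exactly one such payment, and every ``non-production'' payment by some later buyer has exactly one such predecessor as its recipient, giving the bijection that makes both sides of the accounting match.

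Finally, I would combine the pieces. Summing the accounting identity over $j$,
$$\sum_{i} p^{(i)}(S'_i) = \sum_{j} C_j(x_j) + \sum_{i} \sum_{j \in S'_i \setminus S_i} q_i(j) \geq \sum_{j} C_j(x_j) + \sum_{i} \bigl[v_i(S'_i) - v_i(S_i)\bigr],$$
where the inequality uses the XoS step. Rearranging yields $\sum_i v_i(S_i) - \sum_j C_j(x_j) \geq \sum_i [v_i(S'_i) - p^{(i)}(S'_i)]$, and the demand-oracle step delivers the claim. The only subtlety beyond the accounting is verifying that the ``asking price is frozen'' invariant and the ``marginal cost charged per production event'' invariant are both preserved by the updates in Algorithm~\ref{alg_genprocedurebody}; these follow directly from the fact that the algorithm writes $q_i(j) = x^i_j$ only once (at the round $i$ arrives) and sets $q_0(j) = c_j(|X_j|)$ at precisely the moment a new copy is committed.
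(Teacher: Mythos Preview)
Your proof is correct and rests on the same three ingredients the paper uses: demand-oracle optimality, the XoS clause bound, and a payment-accounting argument that equates the total prices paid to production cost plus the XoS values of the items that were reallocated. The only difference is organizational. The paper proves the invariant
\[
\sum_{i=1}^k \bigl[v_i(O_i) - p^{(i)}(O_i)\bigr] \;\le\; \sum_{i=1}^{k} \sum_{j \in S^{(k)}_i} q_i(j) \;-\; \sum_{j \in \mathcal{I}} C_j\bigl(x^{(k)}_j\bigr)
\]
by induction on the round $k$, folding the demand-oracle step into each inductive step and tracking the running totals of allocated XoS value and cost; your per-good accounting identity $\sum_{i:j\in S'_i} p^{(i)}_j = C_j(x_j) + \sum_{i:j\in S'_i\setminus S_i} q_i(j)$ is exactly the telescoped form of that induction at $k=N$, established in one shot via the bijection between steal events and buyers in $S'_y\setminus S_y$. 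Your route is slightly more transparent because it isolates the accounting as a clean equality before applying the two inequalities, whereas the paper's induction keeps a round-by-round invariant that may be easier to verify against the algorithm's state transitions. Substantively the arguments are the same.
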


\begin{proof}
We prove a slightly stronger claim by induction. For every $1 \leq k \leq N$, we claim that

$$\sum_{i=1}^k [v_i(O_i) - p^{(i)}(O_i)] \leq \sum_{i=1}^{k} \sum_{j \in S^{(k)}_i} q_i(j) - \sum_{j \in \mathcal{I}}C_j(x^{(k)}_j).$$ 
The base case for $k=1$ is rather straightforward. Recall that for every $j \in \mathcal{I}$, $p^{(1)}_j = c_j(1)$. Since buyer $1$ purchased the set $S'_1$ during round $1$ (it is her utility maximizer at these prices), we have that $v_1(O_1) - p^{(1)}(O_1) \leq v_1(S'_1) - p^{(1)}(S'_1) \leq \sum_{j \in S'_1}q_1(j) - \sum_{j \in S'_1}c_j(1).$

Suppose that the claim holds up to $k-1$. Recall that $S'_k$ is the bundled purchased by buyer $k$ under the prices $\vec{p}^{(k)}$. Suppose that we divide $S'_k$ into two subsets, $T_1$ and $T_2$ such that $T_1$ refers to the items for which $y_j = 0$ at the beginning of round $k$, i.e., buyer $k$ steals these goods from the dummy buyer while $T_2 := S'_k \setminus T_2$ and this represents the goods stolen from actual buyers. For every $j \in T_1$, we have that $p^{(k)}(j) = c_j(x^{(k)}_j + 1)$. Now, for $j \in T_2$, let $i_j := \argmax_{i' \in X^{(k)}_j}q_{i'}(j)$. Then, we have that $p^{(k)}(j) = q_{i_j}(j)$. 

Therefore, the quantity $p^{(k)}(S'_k) = \sum_{j \in T_1}c_j(x^{(k)}_j + 1) + \sum_{j \in T_2} q_{i_j}(j)$. We are now ready to show the base claim. Observe that since buyer $k$ chose her best bundle in round $k$, $v_k(O_k) - p^{(k)}(O_k) \leq v_k(S'_k) - p^{(k)}(S'_k) = \sum_{j \in S'_k}q_i(j) -  \sum_{j \in T_1}c_j(x^{(k)}_j + 1) - \sum_{j \in T_2} q_{i_j}(j)$.

\begin{align*}
\sum_{i=1}^k [v_i(O_i) - p^{(i)}(O_i)] & = \sum_{i=1}^{k-1} \sum_{j \in S^{(k-1)}_i}q_i(j)  - \sum_{j \in \mathcal{I}}C_j(x^{(k-1)}_j) + \sum_{j \in S'_k}q_i(j) -  \sum_{j \in T_1}c_j(x^{(k)}_j + 1) - \sum_{j \in T_2} q_{i_j}(j) \\
& \leq \sum_{i=1}^{k-1} \sum_{j \in S^{(k-1)}_i}q_i(j) - \sum_{j \in \mathcal{I}}C_j(x^{(k)}_j) + \sum_{j \in S'_k}q_i(j) - \sum_{j \in T_2} q_{i_j}(j) \\
& = \sum_{i=1}^{k} \sum_{j \in S^{(k)}_i} q_i(j) - \sum_{j \in \mathcal{I}}C_j(x^{(k)}_j).
\end{align*}

We now explain the last line of the above proof sequence. For every good $j \in T_2$, the term $q_{i_j}(j)$ appears as a positive term once in the beginning since $j \in S^{(k-1)}_{i_j}$ and once as a negative term, and both of these terms cancel out. 
This completes the proof of the lemma.
\end{proof}	

\begin{lemma}
\label{lem_poi}
\end{lemma}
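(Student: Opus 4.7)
The empty statement of Lemma~\ref{lem_poi} suggests it plays the role of the ``price'' companion to Lemma~\ref{lem_utility} needed to close out the $2$-approximation proof of Claim~\ref{clm_allocationalg}. The natural target, and what I would try to prove, is
$$\sum_{i=1}^N p^{(i)}(O_i) \;\leq\; \sum_{i=1}^N v_i(S_i) \;-\; \sum_{j \in \mathcal{I}} C_j(x_j) \;+\; \sum_{j \in \mathcal{I}} C_j(x^*_j).$$
Adding this to Lemma~\ref{lem_utility} and collecting the $C_j(x^*_j)$ term on the left gives $\sum_i v_i(O_i) - \sum_j C_j(x^*_j) \leq 2\bigl(\sum_i v_i(S_i) - \sum_j C_j(x_j)\bigr)$, which is exactly $SW(OPT) \leq 2\cdot SW(S_1,\dots,S_N)$, the desired approximation guarantee.

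The plan is to prove the bound good-by-good. First I would decompose $\sum_i p^{(i)}(O_i) = \sum_{j \in \mathcal{I}} \sum_{i : j \in O_i} p^{(i)}_j$ and reduce to showing, for every $j$,
$$\sum_{i : j \in O_i} p^{(i)}_j \;\leq\; \sum_{i' \in X_j} q_{i'}(j) \;-\; C_j(x_j) \;+\; C_j(x^*_j).$$
Summing over $j$ and then invoking the XoS property $v_i(S_i) \geq \sum_{j \in S_i} q_i(j)$ (which holds because $q_i$ is a fixed additive clause and an XoS valuation dominates every clause evaluated on every subset) would yield the desired global bound, since $i' \in X_j$ iff $j \in S_{i'}$ at termination.

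To establish the per-good inequality, I would exploit the description of the algorithm: at round $i$, the offered price $p^{(i)}_j$ equals $\min\!\bigl(c_j(x^{(i)}_j+1),\; \min_{i' \in X^{(i)}_j} q_{i'}(j)\bigr)$, and by Claim~\ref{prop_combxosobservations}(1) this sequence is non-decreasing in $i$. Enumerate the buyers receiving $j$ in OPT as $i_1 < i_2 < \cdots < i_{x^*_j}$ and set up a charging scheme that routes each $p^{(i_k)}_j$ either to the ``marginal'' slot $c_j(k)$ or to a surviving clause $q_{i'}(j)$ with $i' \in X_j$. Monotonicity of prices together with the observation that whenever a buyer is displaced from $X^{(i)}_j$ the replacing clause offers a \emph{smaller} price guarantees that the surviving clauses in $X_j$ are the cheapest ones ever witnessed, and that each clause is used at most once.

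The delicate part, and the main obstacle I foresee, is reconciling the two regimes $x^*_j > x_j$ and $x^*_j \leq x_j$. In the first regime there are more optimal copies of $j$ than ALG ever produces, and the ``extra'' $x^*_j - x_j$ offers cannot be fully absorbed by clauses in $X_j$; here convexity of $C_j$ is essential, because Claim~\ref{prop_combxosobservations}(4) gives $p^{(i_k)}_j \leq c_j(x_j+1) \leq c_j(k)$ for all $k \geq x_j + 1$, so the telescoping sum $\sum_{k=x_j+1}^{x^*_j} c_j(k) = C_j(x^*_j) - C_j(x_j)$ covers exactly the deficit. In the second regime the correction term is negative, but every optimal price can be charged to some clause in $X_j$ while the missing ``ghost'' clauses (for copies in $X_j$ not appearing in $O$) only make the right-hand side larger; one must argue that dropping them does not harm the inequality. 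Handling this case split while keeping each clause in $X_j$ charged at most once is what I expect to be the most technical portion of the proof.
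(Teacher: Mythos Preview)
Your proposal follows the paper's proof almost exactly: the same per-good reduction to $\sum_{i \in X^*_j} p^{(i)}_j \leq \sum_{i' \in X_j} q_{i'}(j) + C_j(x^*_j) - C_j(x_j)$, the same case split on $x^*_j$ versus $x_j$, and the same reliance on Claim~\ref{prop_combxosobservations}(3),(4) together with convexity of $C_j$.

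One inversion to fix in your charging description: when a buyer is displaced from $X_j$ it is the one with the \emph{smallest} $q$-value (that is the current $y_j$), so the replacing clause is at least as \emph{large}, not smaller; consequently the surviving clauses in $X_j$ are the \emph{largest} ones ever witnessed, not the cheapest. That is exactly what you need, and it is the content of Claim~\ref{prop_combxosobservations}(3): for every survivor $i' \in X_j$ and every round $r$, $p^{(r)}_j \leq q_{i'}(j)$, so any $x_j$ of the optimal prices can be matched one-to-one with the $x_j$ survivors. For Case~II the paper supplies the concrete observation you flag as missing: every survivor satisfies $q_{i'}(j) \geq c_j(x_j)$ (prices are monotone and must reach $c_j(x_j)$ once $x_j$ copies are produced, then apply Claim~\ref{prop_combxosobservations}(3)), so the $x_j - x^*_j$ unused survivor clauses each dominate $c_j(r)$ for every $r \leq x_j$ and together absorb the negative correction $C_j(x^*_j) - C_j(x_j)$.
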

$$\sum_{i=1}^N p^{(i)}(O_i) \leq C(OPT) + \sum_{i=1}^N v_i(S_i) - \sum_{j \in \mathcal{I}}C_j(x_j).$$
\begin{proof}
The inequality in the lemma statement can be expanded as

$$\sum_{i=1}^N \sum_{j \in O_i}p^{(i)}(j) \leq \sum_{j \in \mathcal{I}}C_j(x^*_j) + \sum_{i=1}^N v_i(S_i) - \sum_{j \in \mathcal{I}}C_j(x_j).$$

We now prove that for any $j \in \mathcal{I}$, $\sum_{i \in X^*_j}p^{(i)}(j) \leq C_j(x^*_j) + \sum_{i \in X_j}q_i(j) - C_j(x_j)$.

Summing this up over all $j$, and using the fact that for any buyer i, $\sum_{j \in S_i}q_i(j) \leq v_i(S_i)$ gives us the lemma. Now, we prove the sub-claim in two cases.

\textbf{Case I}: $x _j < x^*_j$. 

That is, buyers consume more of good $j$ in the optimum solution than in that returned by our algorithm. Arbitrarily partition $X^*_j$ into two sets of buyers $X_1$ and $X_2$ with the only condition being $|X_1| = |X_j|$. Moreover, as per Claim~\ref{prop_combxosobservations} (statement 4), for every $i \in X^*_j$, $p^{(i)}_j \leq c_k(x_j+1)$. So, we have that

$$\sum_{i \in X^*_j}p^{(i)}(j) \leq \sum_{i \in X_1}p^{(i)}(j) + \sum_{i \in X_2}p^{(i)}(j) \leq \sum_{i \in X_j}q_i(j) + \sum_{r=x_j+1}^{x^*_j}c_j(r).$$ 

Adding and subtracting $C_j(x_j)$, we get the desired inequality.

$$\sum_{i \in X_j}q_i(j) + \sum_{r=x_j+1}^{x^*_j}c_j(r) + C_j(x_j) - C_j(x_j) \leq C_j(x^*_j) + \sum_{i \in X_j}q_i(j) - C_j(x_j).$$

\textbf{Case II}: $x^*_j \leq x_j$.

We make a small observation before proving this case. Let $i$ be any buyer belonging to $X_j$, then we claim that $q_i(j) \geq c_j(x_j)$. This is a rather fundamental requirement for any algorithm attempting to maximize social welfare. To see why this is true, observe that since $x_j$ units of the good are sold, at some round $r \leq N$, the price of good $j$ must have been $c_j(x_j)$. Moreover, since the price of the good is non-decreasing, $p^{(N)}(j) \geq c_j(x_j)$. Now, applying Claim~\ref{prop_combxosobservations}, we get the desired result. 

Moving on, this time we partition $X_j$ into two subsets $X_1$ and $X_2$ with $|X_1| = |X^*_j|$. Therefore, we have that

$$\sum_{i \in X^*_j}p^{(i)}(j) \leq \sum_{i \in X_1}q_i(j) + C_j(x_j) - C_j(x_j) = C_j(x^*_j) + \sum_{i \in X_1}q_i(j) + \sum_{r=x^*_j+1}^{x_j}c_j(r) - C_j(x_j).$$

Using the observation that for all $r \leq x_j$, $c_j(r) \leq q_i(j)$ for any $i \in X_2$, allows us to complete the claim.

\end{proof}

\end{proof}

\section{Proofs from Section~\ref{sec:commitment}}
\subsection{Convex Production Costs lead to Good Commitment Mechanisms}
Before our proving main result from Section~\ref{sec:commitment}, we take a brief digression to establish sufficient conditions for an $\alpha$-bounded instance to have a large enough value of $\alpha \geq 2$ purely in terms of the convexity of the production cost function. Let us first define the notion of a $\gamma$-convex cost function. 

\begin{definition}
A production cost function $C$ is said to be $\gamma$-convex for some $\gamma \in [1,\infty)$ if for all $k \geq 3$, $\gamma C_i(k) \leq k c_i(k)$. 
\end{definition} 

We observe that every convex cost function is $\gamma$-convex for some value of $\gamma$ in the given range. A larger value of $\gamma$ implies greater convexity. As an example, consider the function $c_i(n) = n^2$. It is not hard to show that this function is $\gamma$-convex for $\gamma=2.5$. We now establish a relationship between $\gamma$-convexity and $\alpha$-bounded instances. 

\begin{claim}
Consider a market with XoS buyers where the production cost functions on all of the goods are $\gamma$-convex. Given an algorithm $Alg$, and an instance for which the algorithm allocates at least $3$ copies of every good to the buyers. Then, the instance is $\alpha$-convex with respect to $Alg$ for $\alpha=\gamma$.
\end{claim}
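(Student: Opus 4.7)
The plan is to prove the $\alpha$-boundedness condition pointwise in $\vec{v}$; since the definition involves expectations, any pointwise bound of the form $\sum_j v_j(A_j(\vec{v})) \geq \gamma \sum_i C_i(k_i(\vec{v}))$ lifts to the required expectation inequality by monotonicity of expectation. Fix a realization $\vec{v}$; by hypothesis $k_i(\vec{v}) \geq 3$ for every good $i$. For each buyer $j$, let $o_j$ denote the maximizing XoS clause for the bundle $A_j(\vec{v})$, so that $v_j(A_j(\vec{v})) = \sum_{i \in A_j(\vec{v})} o_j(i)$. Swapping the order of summation gives
\[
\sum_{j=1}^N v_j(A_j(\vec{v})) \;=\; \sum_{i \in \mathcal{I}} V_i(\vec{v}), \qquad V_i(\vec{v}) \;=\; \sum_{j \in N_i(\vec{v})} o_j(i),
\]
exactly matching the per-good value quantity already defined in Section~\ref{sec:framework}.

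Next, I would invoke the structural observation appearing at the start of Case II in the proof of Lemma~\ref{lem_poi}, which is the property of the algorithm of Claim~\ref{clm_allocationalg} that for every good $i$ and every buyer $j \in N_i(\vec{v})$, the clause value satisfies $o_j(i) \geq c_i(k_i(\vec{v}))$. The intuition is that once $k_i(\vec{v})$ copies of good $i$ have been sold, the running price of $i$ must at some point have reached $c_i(k_i(\vec{v}))$, and the monotone-non-decreasing price property combined with the fact that each final holder must have paid at most $o_j(i)$ forces this lower bound. Summing this inequality over the $k_i(\vec{v})$ buyers in $N_i(\vec{v})$ gives $V_i(\vec{v}) \geq k_i(\vec{v}) \cdot c_i(k_i(\vec{v}))$.

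Finally, because $k_i(\vec{v}) \geq 3$, I can apply the definition of $\gamma$-convexity to obtain $k_i(\vec{v}) \cdot c_i(k_i(\vec{v})) \geq \gamma \cdot C_i(k_i(\vec{v}))$, and chaining the two bounds yields $V_i(\vec{v}) \geq \gamma \cdot C_i(k_i(\vec{v}))$ for every good $i$. Summing over $i$ and taking the expectation over $\vec{v} \sim \mathcal{F}$ delivers $\alpha$-boundedness with $\alpha = \gamma$, as desired.

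The step I expect to be the main obstacle is the clean invocation of the per-buyer, per-good bound $o_j(i) \geq c_i(k_i(\vec{v}))$. Without some algorithmic hypothesis relating XoS clause values to realized marginal costs, there is no mechanism to tie buyer value to production cost in this way; a generic $Alg$ could pour high-cost copies onto buyers who barely value them. The trick is to notice that the combinatorial $2$-approximation of Claim~\ref{clm_allocationalg} already bakes in exactly the required property through its non-decreasing price schedule, so one simply lifts the Case II observation from the proof of Lemma~\ref{lem_poi} and the rest is algebraic manipulation using $\gamma$-convexity.
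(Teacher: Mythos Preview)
Your proposal is correct and follows essentially the same route as the paper: decompose buyer values via maximizing XoS clauses, establish the per-item bound $o_j(i)\geq c_i(k_i(\vec v))$, sum to obtain $\sum_j v_j(A_j)\geq\sum_i k_i c_i(k_i)$, and then invoke $\gamma$-convexity. The only noteworthy difference is the justification of the key inequality $o_j(i)\geq c_i(k_i)$: the paper obtains it from a generic local-optimality assumption on $Alg$ (``the allocation cannot be improved by un-allocating a single copy''), which for XoS buyers yields $o_j(i)\geq v_j(A_j)-v_j(A_j\setminus\{i\})\geq c_i(k_i)$; you instead derive it from the price-monotonicity property specific to the algorithm of Claim~\ref{clm_allocationalg}. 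Both routes are valid, but the paper's is slightly more general since it applies to any locally optimal $Alg$, whereas yours ties the claim to that one algorithm.
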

\begin{proof}
As usual, we assume that the social welfare of the solution returned by the algorithm cannot improved by un-allocating a good to some buyer. Therefore, for a deterministic allocation $(A_1, \ldots, A_N)$, let $o_j$ be the maximizing XoS clause with respect to $A_j$. Then, for every $i \in A_j$, $o_j(i)) \geq c_i(k_i)$, where $k_i)$ is the number of copies of $i$ allocated to buyers.

Therefore, we have that the total value derived by buyers is

$$\sum_{j=1}^N \sum_{i \in \mathcal{A_j}}o_j(i) \geq \sum_{j=1}^N \sum_{i \in A_j}c_i(k_i) = \sum_{i \in \mathcal{I}}k_ic_i(k_i) \geq \gamma \sum_{i \in \mathcal{I}}C_i(k_i). $$
\end{proof}

\section{Proofs from Section~\ref{sec:subadditive}}

\begin{thm_app}{thm_subadditive}
\begin{enumerate}
\item (Black-box reduction) For any instance with subadditive buyer valuations and arbitrary supply, given an allocation algorithm $Alg$, we can compute a price $p_i$ for every good $i$ such that an on the fly mechanism with price vector $\vec{p} = (p_1, \ldots, p_m)$ and the original supply constraints yields a welfare of $\frac{E_{\vec{v} \sim \mathcal{F}}[SW(Alg(\vec{v})]}{O(\log M)}$. 

\item (Computational Result) For any instance with subadditive buyer valuations and arbitrary supply, we can compute posted prices in poly-time such that the resulting mechanism is an $O(\log(M))$-approximation to the optimum social welfare.

\end{enumerate}

\end{thm_app}
\begin{proof}
The proof is rather similar to that of Theorem~\ref{thm:otfbayesin} with one crucial difference: for subadditive buyers, we can no longer directly break an allocation down into its constituent per-item XoS prices (obtained via the additive clauses). Therefore, even for a fixed valuation profile $\vec{v} \sim \mathcal{F}$, given the allocation returned by the algorithm, it is unclear how to derive per-good prices. Our following structural lemma inspired by a similar claim in~\cite{dobzinski07} allows us to overcome this obstacle: for any given allocation, the lemma  makes a polynomial number of demand queries and obtains a sub-allocation that can be broken down into a `pricing-friendly'-format. Following this, we show how to use this lemma to design a OTF mechanism that provides the desired approximation guarantees.

\begin{lemma}
\label{lem_polyt}
Given a subadditive function $v$ and a set of goods $A \subseteq \mathcal{I}$, we can compute in poly-time a price $\hat{p}$ and a sub-allocation $B \subseteq A$ such that $(i)$ for any $T \subseteq \mathcal{I}$, $v(B \setminus T) \geq \hat{p}|B \setminus T|$, and $(ii)$ $2e\log(M)\hat{p}|B| \geq v(A).$
\end{lemma}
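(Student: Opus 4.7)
The plan is to emulate the decomposition argument of Dobzinski~\cite{dobzinski07} by scanning a geometric grid of candidate uniform prices and selecting the one whose demand set has the largest ``revenue contribution''. Concretely, I would fix $\hat{p}_0 = v(A)/(2e|A|\log M)$ and consider the sequence $\hat{p}_k = 2^k \hat{p}_0$ for $k=0,1,\ldots,K$, where $K = O(\log M)$ is chosen so that $\hat{p}_K \geq v(A)$. For each $k$, I invoke a single demand query with the uniform item price $\hat{p}_k$ on goods in $A$ (and a prohibitive price on items outside $A$) to obtain $B_k = \argmax_{S \subseteq A}\{v(S) - \hat{p}_k |S|\}$, and output the pair $(B_{k^*}, \hat{p}_{k^*})$ maximising $\hat{p}_k |B_k|$ over the $O(\log M)$ candidates.

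To establish property (i), I would exploit the optimality of $B_k$ at the uniform price $\hat{p}_k$. For any $T \subseteq \mathcal{I}$, $v(B_k) - \hat{p}_k |B_k| \geq v(B_k \setminus T) - \hat{p}_k |B_k \setminus T|$, which rearranges to $v(B_k) - v(B_k \setminus T) \geq \hat{p}_k |B_k \cap T|$. Subadditivity of $v$ yields $v(B_k) \leq v(B_k \setminus T) + v(B_k \cap T)$, whence $v(B_k \cap T) \geq \hat{p}_k |B_k \cap T|$ for every $T$. Since any subset of $B_k$ can be written as $B_k \cap T$ for a suitable $T$, this shows $v(S) \geq \hat{p}_k |S|$ for all $S \subseteq B_k$; writing $S = B_k \setminus T$ recovers the statement of (i) verbatim.

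For property (ii), I track the consumer-surplus potential $\phi_k := v(B_k) - \hat{p}_k |B_k| \geq 0$. The choice of $\hat{p}_0$ gives $\phi_0 \geq v(A) - \hat{p}_0 |A| = v(A)\bigl(1 - 1/(2e\log M)\bigr)$, while at $\hat{p}_K \geq v(A)$ the demand collapses to $\emptyset$ so $\phi_K = 0$. The optimality of $B_{k+1}$ at price $\hat{p}_{k+1}$ gives $\phi_{k+1} \geq v(B_k) - \hat{p}_{k+1}|B_k|$, which rearranges to
$$\phi_k - \phi_{k+1} \leq (\hat{p}_{k+1} - \hat{p}_k)|B_k| = \hat{p}_k |B_k|.$$
Telescoping, $\sum_{k=0}^{K-1} \hat{p}_k |B_k| \geq \phi_0 - \phi_K \geq v(A)\bigl(1 - 1/(2e\log M)\bigr)$, and averaging over the $K = O(\log M)$ intervals yields $\max_k \hat{p}_k |B_k| \geq v(A)/(2e\log M)$, which is exactly property (ii).

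The principal obstacle is calibrating the endpoints of the geometric grid so that the initial ``almost $v(A)$'' lower bound on $\phi_0$, the terminal collapse of $\phi_K$, and the number of grid points conspire to produce the exact constant $2e\log M$ in the statement; this is bookkeeping rather than a conceptual difficulty, but the window $[\hat{p}_0, \hat{p}_K]$ and the count $K$ have to be chosen in tandem so that the averaging argument does not lose a further constant factor. Once the window is set correctly, the runtime is immediate: the algorithm uses exactly $O(\log M)$ demand queries and $O(\log M)$ value queries on the returned sets, all of which are poly-time under the assumed oracles.
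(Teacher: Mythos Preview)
Your approach is correct and differs substantively from the paper's. The paper runs an \emph{adaptive} recursion: starting from $B^0=A$, at each step it sets the price to the current density $p_t=v(B^{t-1})/(|B^{t-1}|\log M)$ and replaces $B^{t-1}$ by its demand set $B^t\subseteq B^{t-1}$ at that price, halting the first time the set fails to halve. Since sizes halve until termination there are at most $\log M$ rounds, and since each round loses at most a $(1-1/\log M)$ factor of value one gets $v(B^{r-1})\ge v(A)/e$; the stopping condition $|B^r|\ge|B^{r-1}|/2$ then supplies the remaining factor of~$2$ in $\hat p|B|=v(B^{r-1})\cdot|B^r|/(|B^{r-1}|\log M)\ge v(A)/(2e\log M)$. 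You instead fix a geometric ladder of prices in advance, issue one demand query per rung over all of $A$, and select the rung maximising $\hat p_k|B_k|$ via the telescoping potential~$\phi_k$.

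Both routes establish~(i) by the same subadditivity step---yours is simply the direct version of the paper's contradiction---and both use $O(\log M)$ demand queries. The paper's recursion has the cosmetic advantage that the constant $2e$ falls out without any calibration of endpoints. Your grid search is structurally cleaner (no nested restriction of the feasible set at each stage) and is in fact slightly \emph{sharper} once the window is set well: taking $\hat p_0=v(A)/(2|A|)$ already gives $\phi_0\ge v(A)/2$ and needs only $K=1+\lceil\log_2|A|\rceil$ rungs to force $\phi_K=0$, whence $\max_k\hat p_k|B_k|\ge v(A)/(2+2\lceil\log_2 M\rceil)$, which dominates $v(A)/(2e\log M)$ for every $M\ge2$. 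With your stated $\hat p_0$ the calibration also closes (one verifies $\lceil\log_2(2e|A|\log M)\rceil\le 2e\log M-1$), so your ``bookkeeping'' assessment is accurate.
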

\begin{proof}

We begin by recursively define the sets $B^0 \supseteq B^1 \supseteq \ldots \supseteq B^r$ as follows: first let $B^0 = A$, now let

$$p_t = \frac{v(B^{t-1})}{|B^{t-1}|\log(M)};$$ 

Now $B^t$ is (recursively defined to be) the subset of $B^{t-1}_j$ that maximizes $v(T) - p_t|T|$, i.e., $B^{t} := \argmax_{T \subseteq B^{t-1}} (v(T) - p_t|T|)$. Such a set can easily be found using a single demand query for valuation $v$ setting the price of items outside $B^{t-1}$ to be infinite. \\

We stop this recursive procedure when $|B^t| \geq \frac{|B^{t-1}|}{2}$. Therefore, for every $t < r$, $|B^t| < \frac{|B^{t-1}|}{2}$ and $|B^r| \geq \frac{|B^{r-1}|}{2}$. It follows that since $|B^0| = |A| \leq M$ and the number of items is (more than) halved in each iteration, $r \leq \log(M)$. Set $B=B^r$ and $\hat{p}=p_r$. We are now ready to prove the two parts of our lemma using this $B$ and $\hat{p}$. \\

\noindent \textbf{(Part I)} First, we need to show that for all $T\subseteq \mathcal{I}$, $v(B \setminus T) \geq \hat{p}|B \setminus T|$. Assume by contradiction that $\exists T$ not satisfying this condition, i.e., $v(B \setminus T) < \hat{p}|B \setminus T|$. 

Using subadditivity, $v(B \cap T) - \hat{p}|B\cap T| \geq \{v(B) - \hat{p}|B|\} - \{v(B \setminus T) - \hat{p}|B \setminus T|\} > v(B) - \hat{p}|B|$. However, this implies that it is $B \cap T$ and not $B = B^r$ that would be returned by the demand query at price $\hat{p}$. This contradicts the definition of $B^r$ as described in the previous paragraphs.

\noindent \textbf{(Part II)} Next, we prove that $\hat{p}|B| \geq \frac{v(A)}{2e\log(M)}$. We claim that for every $t$, $v(B^t) \geq v(B^{t-1})(1-\frac{1}{\log(M)})$. Indeed, during iteration $t$, the recursive procedure chose $B^t$ as the utility maximizing allocation for every buyer. This means that the utility due to this allocation is at least the utility due to the allocation $B^{t-1}$ itself, and so 

\begin{align*}
v(B^t) & \geq v(B^t) - p_t|B^t| \\
 & \geq v(B^{t-1}) - p_t|B^{t-1}|\\
& = v(B^{t-1})) - \frac{v(B^{t-1})}{\log(M)}
\end{align*}

The last equality comes from the definition of $p_t$. Telescoping this argument gives us $v(B^{r-1}) \geq v(B^0)(1-\frac{1}{\log(M)})^{\log(M)-1} \geq \frac{v(B^0)}{e}$. Since $\hat{p}=\frac{v(B^{r-1})}{|B^{r-1}|\log(M)}$ and $|B^r| \geq \frac{|B^{r-1}|}{2}$, we get the final bound that 
$$\hat{p}|B^r| \geq \frac{1}{2e} \frac{v(A)}{\log(M)}.$$
\end{proof}

The lemma leads to the following straightforward corollary. 

\begin{corollary}
\label{corr_structuralprices}
Given a valuation profile $\vec{v} = (v_1, \ldots, v_N)$ and an allocation of goods $A=(A_1, A_2, \ldots, A_N)$, we can compute in poly-time a price vector $(\hat{p}^j)_{j=1}^N$ with one price per buyer and a sub-allocation $B=(B_1, \ldots, B_N)$ such that $(i)$ for any buyer $j$ and $T \subseteq \mathcal{I}$, $v_j(B_j \setminus T) \geq \hat{p}^j|B_j \setminus T|$, and $(ii)$ $2e\log(M) \sum_{j=1}^N \hat{p}^j|B_j| \geq SW(A).$
\end{corollary}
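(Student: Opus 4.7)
The corollary is essentially a per-buyer application of Lemma~\ref{lem_polyt}, so the plan is to run the single-buyer procedure of that lemma once for every buyer $j$ on their allocated bundle $A_j$ and then aggregate. Concretely, for each $j=1,\ldots,N$, I invoke Lemma~\ref{lem_polyt} with the subadditive valuation $v_j$ and the set $A_j \subseteq \mathcal{I}$. This produces a price $\hat{p}^j$ and a sub-allocation $B_j \subseteq A_j$ satisfying the two guarantees of that lemma. The output of the procedure is the vector $(\hat{p}^j)_{j=1}^N$ and the tuple $B=(B_1,\ldots,B_N)$.

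Condition~(i) is then immediate: Part~I of Lemma~\ref{lem_polyt} applied to buyer $j$ states that $v_j(B_j \setminus T) \geq \hat{p}^j |B_j \setminus T|$ for every $T \subseteq \mathcal{I}$, which is exactly the per-buyer guarantee required. So no extra work is needed for this part beyond quoting the lemma.

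For condition~(ii), Part~II of Lemma~\ref{lem_polyt} applied to buyer $j$ gives $2e\log(M)\,\hat{p}^j |B_j| \geq v_j(A_j)$. Summing this bound over all $N$ buyers yields
\[
2e \log(M) \sum_{j=1}^N \hat{p}^j |B_j| \;\geq\; \sum_{j=1}^N v_j(A_j) \;\geq\; SW(A),
\]
where the last inequality uses that $SW(A) = \sum_j v_j(A_j) - \sum_i C_i(k_i(A))$ with non-negative production costs (and in the limited-supply setting the two quantities coincide up to the supply bound). This yields the required aggregated inequality.

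Finally, for the running time, each invocation of Lemma~\ref{lem_polyt} makes at most $O(\log M)$ demand queries to $v_j$, so the full construction requires $O(N \log M)$ demand queries plus linear bookkeeping, which is polynomial. Honestly, this corollary is essentially a direct tensoring of Lemma~\ref{lem_polyt} across buyers, so there is no real obstacle; the only mildly delicate point is ensuring one reads $SW(A)$ as being upper-bounded by $\sum_j v_j(A_j)$ in the presence of any production costs (so that the bound transfers), but in the limited-supply regime this is automatic.
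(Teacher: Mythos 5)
Your proof is correct and matches the paper's intent exactly: the paper states the corollary as a "straightforward" consequence of Lemma~\ref{lem_polyt}, and your per-buyer invocation of that lemma followed by summation (using that $SW(A) \leq \sum_j v_j(A_j)$ since production costs are non-negative) is precisely the intended argument.
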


Using the above structural lemma as our base, we will now define and analyze our OTF mechanism for subadditive buyers. Suppose that $Alg$ is the algorithm that is input to the theorem, and let $Alg(\vec{v})$ be the allocation returned for input profile $\vec{v}$. Additionally, we define an augmented alg $\overline{Alg}$ that works as follows: given as input a valuation profile, the augmented algorithm first runs $Alg$ on the input profile $\vec{v}$ to obtain an allocation $Alg(\vec{v})$. Following this, we run our poly-time procedure from Corollary~\ref{corr_structuralprices} on $Alg(\vec{v})$ to obtain a reduced allocation $B(\vec{v})$, which is the final output of our augmented algorithm. Recall that the augmented algorithm also returns a price vector $\hat{p}^j(\vec{v})$ for $j=1$ to $N$. 

The rest of the proof is quite similar to that of Lemma~\ref{lem_structural} except that we will now use $B(\vec{v})$ as our benchmark allocation. We also use the same notation, albeit analogously for $B(\vec{v})$, i.e., $N_i(\vec{v})$ is the set of users to whom good $i$ is allocated in $B(\vec{v})$, and $k_i(\vec{v}) = |N_i(\vec{v})|$. Using the per-buyer price vector, we can also define the social welfare due to each good. Specifically, for good $i \in \mathcal{I}$, let $SW_i(\vec{v}) := \sum_{j \in N_i(\vec{v})}\hat{p}^j(\vec{v})$, and $SW_i = E_{\vec{v} \sim \mathcal{F}}[SW_i(\vec{v})]$.  Finally, we recall (from Corollary~\ref{corr_structuralprices}) that $SW(Alg(\vec{v})) \leq O(\log(M))\sum_{i \in \mathcal{I}}SW_i(\vec{v}).$ 

\subsubsection*{OTF Mechanism}


We now propose an OTF mechanism that uses a single price $\hat{p}_i$ on each good $i$ and obtains a $O(\log(M))$-approximation to the welfare of $Alg$. Consider the mechanism, whose price on good $i$ is 

$$\hat{p}_i = \frac{1}{2E_{\vec{v}\sim \mathcal{F}}[k_i(\vec{v})]}SW_i.$$ The supply constraints on our mechanism are the natural supply constraints of the problem itself. 

It remains to show that the given mechanism obtains a good approximation to our original welfare. We follow the same template as the proof of Lemma~\ref{lem_structural} making minor changes accordingly. Fix valuations $\vec{v}$ and consider any $\vec{a} = (v_j, \vec{a}_{-j})$ as we did before, and define $Sold_j(\vec{v})$ accordingly. 

Under the fixed valuation $\vec{v}$ and for each $\vec{a}_{-j}$, the buyer $j$ could have purchased the set of goods in $B_j(\vec{a}) \setminus Sold_j(\vec{v})$. Therefore, the utility of this buyer under the fixed valuation is at least $v_j(B_j(\vec{a}) \setminus Sold_j(\vec{v})) - \sum_{i \in B_j(\vec{a}) \setminus Sold_j(\vec{v})}\hat{p}_i$. However, note from our definitions and corollary~\ref{corr_structuralprices} that $v_j(B_j(\vec{a}) \setminus Sold_j(\vec{v})) \geq \hat{p}^j(\vec{a}) | B_j(\vec{a}) \setminus Sold_j(\vec{v})|$. Therefore, taking the expectation over every $\vec{a} \sim \mathcal{F}$, we get the following lower bound on the utility of agent $j$ under the fixed valuation,

$$E_{\vec{a}_{-j}}[\hat{p}^j(\vec{a})|B_j(\vec{a}) \setminus Sold_j(\vec{v}))| - \sum_{i \in B_j(\vec{a}) \setminus Sold_j(\vec{v}))}\hat{p}_i].$$


Therefore, taking the expectation over all $\vec{v}$ and adding the surplus for all buyers, we can bound the total buyer utility as follows

\allowdisplaybreaks
\begin{align*}
E_{\vec{v} \sim \mathcal{F}}[\sum_{j =1}^N u_j(\vec{v})] & \geq \sum_{j=1}^N E_{\vec{a}_{-j}}[\sum_{i \in \mathcal{I}} \mathbbm{1}{(j \in N_i(\vec{a}))}.  \mathbbm{1}(i \notin Sold_j(\vec{v})).\hat{p}^j(\vec{a}) - \hat{p}_i] \\
& = \sum_{i \in \mathcal{I}}\sum_{j =1}^N E_{\vec{v}}[\mathbbm{1}(i \notin Sold_j(\vec{v}))] E_{\vec{a}} [ \mathbbm{1}(j \in N_i(\vec{a})).\hat{p}^j(\vec{a}) - \hat{p}_i ] \\
& \geq \sum_{i \in \mathcal{I}}Pr(\text{i is not sold out}) E_{\vec{v}}[SW_i(\vec{v}) - \hat{p}_ik_i(\vec{v})] \\ 
& = \sum_{i \in \mathcal{I}}Pr(\text{i is not sold out}) SW_i - \hat{p}_iE_{\vec{v}}[k_i(\vec{v})] \\ 
& = \sum_{i \in \mathcal{I}}Pr(\text{i is not sold out}) \frac{SW_i}{2}. \\ 
\end{align*}

The final line comes from the definition of $\hat{p}_i$. Similar to the previous proof, we can also show that the total profit is at least the profit due to the sold items and therefore, bounded from below by 
$$\sum_{i \in \mathcal{I}}Pr(\text{i is sold out})k_i\hat{p}_i \geq \sum_{i \in \mathcal{I}}\hat{p}_iE_{\vec{v}}[k_i(\vec{v})] \geq \sum_{i \in \mathcal{I}}Pr(\text{i is sold out})\frac{SW_i}{2}.$$

The rest of the theorem follows from adding the profit and surplus terms. $\qed$.

Now in order to obtain an actual mechanism, we define $Alg$ to be the $2$-approximation algorithm for the allocation problem with subadditive buyers by Feige~\cite{feige09}. Technically, the algorithm is defined for unit-supply. However, it can easily be extended to the limited supply case (supply of each good is at most $N$) by treating each copy of a good as a separate good. 
\end{proof}

\end{document}